\documentclass[11pt]{article}

\usepackage[a4paper,margin=1in]{geometry}
\usepackage{amsmath,amssymb,amsthm}
\usepackage{enumitem}
\usepackage[T1]{fontenc}
\usepackage{lmodern}
\usepackage{microtype}
\usepackage[colorlinks=true,linkcolor=blue,citecolor=blue,urlcolor=blue]{hyperref}

\newcommand{\manuscriptdate}{25 September 2026}
\hypersetup{pdftitle={Gatheral's Conjecture Revisited},
 pdfauthor={Vladimir Lucic},
 pdfsubject={Revised manuscript, \manuscriptdate},
 pdfkeywords={Heston model, local volatility, Markovian projection, convex order}}

\newtheorem{theorem}{Theorem}
\newtheorem{proposition}[theorem]{Proposition}
\newtheorem{lemma}[theorem]{Lemma}
\newtheorem{corollary}[theorem]{Corollary}
\theoremstyle{remark}
\newtheorem{remark}[theorem]{Remark}
\newtheorem*{remark*}{Remark}

\newcommand{\E}{\mathbb{E}}
\newcommand{\R}{\mathbb{R}}
\newcommand{\dd}{\,\mathrm{d}}
\newcommand{\Law}{\operatorname{Law}}
\newcommand{\Dop}{\mathcal{D}}

\title{Gatheral's Conjecture Revisited}
\author{Vladimir Lucic\\
\normalsize Imperial College London\\
\normalsize \href{mailto:vlucic@ic.ac.uk}{vlucic@ic.ac.uk}}
\date{\manuscriptdate}

\begin{document}

\maketitle

\begin{abstract}
We consider the Heston model with perfect negative spot--variance
correlation and its one-dimensional local volatility projection.
Let $I_T^{\mathrm H}$ and $I_T^{\mathrm{LV}}$ denote their respective
integrated variances over $[0,T]$.  We establish the inequality
\[
  \mathbb{E}\bigl[(I_T^{\mathrm H}-K)^+\bigr]
  <
  \mathbb{E}\bigl[(I_T^{\mathrm{LV}}-K)^+\bigr]
\]
for every expiry $T>0$ and every strike $K>0$.
Consequently, Heston integrated variance is strictly smaller in convex
order than the integrated variance of the calibrated local volatility
model.
This strict ordering
gives a Heston-model counterexample to the convex-order inequality
conjectured by J.\ Gatheral.
\end{abstract}

\section{Introduction}

In 2005, Gatheral~\cite{Gatheral2005} formulated the conjecture that, among
diffusive models calibrated to the same European option prices, local
volatility minimizes the value of every convex payoff of realized variance;
he later restated it in~\cite[p.~155]{Gatheral2006}.
Related practitioner calculations had already supplied implicit,
model-specific precursors in line with this claim.  Thus, for the concave volatility-swap
payoff, Brockhaus and Long~\cite{Brockhaus2000} reported higher prices in the
calibrated local volatility model than in the Heston stochastic volatility model;
see also Brockhaus et
al.~\cite[Section~12.3.2]{BrockhausEtAl2000}.
Austing~\cite[Section~10.3.4]{Austing2014} presents this
volatility-swap ordering as an empirical practitioner rule.
Numerical comparisons consistent with this ordering appear in
Le Floc'h~\cite[Section~6.3, Table~9]{LeFloch2018} for the Heston model,
and in Lipton and Reghai~\cite[Section~4.3, Table~6]{LiptonReghai2023}
for local stochastic volatility models with an exponential
Ornstein--Uhlenbeck factor.

The conjecture of Gatheral was disproved by Beiglb\"ock, Friz,
and Sturm~\cite{BeiglbockFrizSturm2011}, who provided two carefully
constructed counterexamples.
While their result settled the original claim,
it still left open the question of whether commonly used stochastic volatility models in which the
variance itself follows a Markov process can exhibit such behavior.
We show that the classical
Heston model with perfect negative correlation presents such an example.

In the remainder of this section, we informally outline the main
ideas and steps of our approach, deferring the full rigor to subsequent sections.
First, we fix some notation.
Throughout, $\R_+:=[0,\infty)$.
For integrable random variables $X$ and $Y$, the notation
$X\preceq_{\mathrm{cx}}Y$ means
\[
 \E[\varphi(X)]\leq\E[\varphi(Y)]
\]
for every convex function $\varphi:\R\to\R$ for which both expectations
are finite, and $X\prec_{\mathrm{cx}}Y$ holds when, in addition,
$\Law(X)\ne\Law(Y)$.

Under zero interest rates and dividends, the stochastic volatility model of
Heston~\cite{Heston1993} is given by
\begin{align}
  \frac{\dd S_t}{S_t}&=\sqrt{v_t}\,\dd W_t,\label{eq:intro-heston-S}\\
  \dd v_t&=\kappa(\theta-v_t)\,\dd t
            +\xi\sqrt{v_t}\,\dd Z_t,
  \label{eq:intro-heston-v}\\
  \dd\langle W,Z\rangle_t&=\rho\,\dd t.\notag
\end{align}
Here $W$ and $Z$ are Brownian motions,
$S_0,v_0,\kappa,\theta,\xi>0$, and $\rho\in[-1,1]$.
We denote the integrated variance by
\[
  I_t^{\mathrm H}:=\int_0^t v_u\,\dd u.
\]
Its projected local variance is
\begin{equation}
  \lambda(t,s):=\E[v_t\mid S_t=s],
  \label{eq:intro-lambda}
\end{equation}
and the associated ``Markovian projection'' local volatility model and its accumulated variance are
\[
  \frac{\dd\bar S_t}{\bar S_t}
     =\sqrt{\lambda(t,\bar S_t)}\,\dd B_t,
  \qquad
  I_t^{\mathrm{LV}}
     :=\int_0^t\lambda(u,\bar S_u)\,\dd u.
\]
%
The result of Brunick and Shreve~\cite[Corollary~3.7]{BrunickShreve2013} provides at least one weak
solution $\bar S$ of this equation having the same
one-dimensional marginals as $S$.  Moreover,
$I_T^{\mathrm H}$ and $I_T^{\mathrm{LV}}$ have the same mean by Tonelli's
theorem and the mimicking property; see \eqref{eq:equal-means}.

Henceforth, assume that $\rho=-1$, so that we may take $Z=-W$.  This is the
perfectly negatively correlated Heston model, also known as
the Heston--Nandi model; see Heston and Nandi~\cite{HestonNandi2000} and
Cox and Wang~\cite{CoxWang2013bounds}.
The main result of this note is the following convex ordering for the
Heston--Nandi model:
\begin{equation*}
  I_T^{\mathrm H}\prec_{\mathrm{cx}}I_T^{\mathrm{LV}},
  \qquad T>0.
\end{equation*}

Our proof rests on two ingredients: a backward Duhamel comparison
and a conditional form of Chebyshev's covariance
inequality.  Perfect negative correlation is what makes the two fit together.

Eliminating the common stochastic integral from
\eqref{eq:intro-heston-S}--\eqref{eq:intro-heston-v} gives the pathwise identity
\begin{equation*}
  v_t=v_0+\kappa\theta t-\xi\log(S_t/S_0)
      -\beta I_t^{\mathrm H},
  \qquad \beta:=\kappa+\frac{\xi}{2}>0.
\end{equation*}
After conditioning on the current spot and using \eqref{eq:intro-lambda},
we obtain
\begin{equation}
  \lambda(t,S_t)-v_t
   =-\beta\Bigl(\E[I_t^{\mathrm H}\mid S_t]
          -I_t^{\mathrm H}\Bigr).
  \label{eq:intro-centered}
\end{equation}
For this informal outline, assume that $\lambda$
is sufficiently regular to justify the calculations below.
For a smooth nondecreasing convex payoff $\Phi$, let
\[
 V(t,s,i):=\E_{t,s}\!\left[
 \Phi\!\left(i+\int_t^T\lambda(u,\bar S_u)\,\dd u\right)\right],
\]
where the expectation is taken for the local volatility process started at
$\bar S_t=s$.  The backward equation is
\[
 V_t+\lambda(t,s)\bigl(\tfrac12s^2V_{ss}+V_i\bigr)=0,
 \qquad V(T,s,i)=\Phi(i).
\]

Writing $H(t,s,i):=\tfrac12s^2V_{ss}(t,s,i)+V_i(t,s,i)$,
It\^o's formula gives the Duhamel identity
\begin{equation}
 \begin{aligned}
 \E[\Phi(I_T^{\mathrm{LV}})]-\E[\Phi(I_T^{\mathrm H})]
 &=V(0,S_0,0)-\E[\Phi(I_T^{\mathrm H})]\\
 &=\int_0^T\E[(\lambda(t,S_t)-v_t)
                 H(t,S_t,I_t^{\mathrm H})]\,\dd t.
 \end{aligned}
 \label{eq:intro-duhamel}
\end{equation}
%
Our approach follows the ``hedging with the wrong volatility'' argument of
El Karoui, Jeanblanc-Picqu\'e and Shreve~\cite{ElKarouiJeanblancShreve1998},
but obtains the required sign of the integrand in \eqref{eq:intro-duhamel}
through conditional averaging rather than
a pointwise ordering of the instantaneous variances.
By \eqref{eq:intro-centered}, we can write the integrand as
\[
 -\beta\E\!\left[
 \bigl(\E[I_t^{\mathrm H}\mid S_t]-I_t^{\mathrm H}\bigr)
 H(t,S_t,I_t^{\mathrm H})\right].
\]
The Feynman--Kac representation for the differentiated backward equation
shows that $H\geq0$ and that $H$ is nondecreasing in $i$
(see Lemma~\ref{lem:smooth}). On the other hand,
for $\Law(S_t)$-almost every $s$, the function
\[
 f_{t,s}(i):=-\beta\bigl(\E[I_t^{\mathrm H}\mid S_t=s]-i\bigr)
\]
is increasing in $i$, since $f_{t,s}'(i)=\beta>0$, with
$\E[f_{t,s}(I_t^{\mathrm H})\mid S_t=s]=0$.
Thus,
Stroock~\cite[Exercise~1.1.9(i)]{Stroock2024} gives
\[
 \begin{aligned}
 &\E[f_{t,s}(I_t^{\mathrm H})H(t,s,I_t^{\mathrm H})\mid S_t=s]\\
 &\quad\geq\E[f_{t,s}(I_t^{\mathrm H})\mid S_t=s]\,
             \E[H(t,s,I_t^{\mathrm H})\mid S_t=s]=0,
 \end{aligned}
\]
as both factors are nondecreasing in the same variable.
Averaging over $S_t$ shows that the integrand in
\eqref{eq:intro-duhamel} is nonnegative.
Approximating call payoffs by smooth nondecreasing convex functions gives
the call inequalities;
equality of means then gives convex order
$I_T^{\mathrm H}\preceq_{\mathrm{cx}}I_T^{\mathrm{LV}}$.


Extending this argument, our main result proves the {\em strict} convex order at every positive expiry.
Cox and Wang~\cite{CoxWang2013bounds} establish an asymptotic extremality
property of the Heston--Nandi model relative to robust variance-option bounds
based on a prescribed terminal marginal; see also their robust Root-barrier
bounds~\cite{CoxWang2013root}.  Our result instead gives an exact comparison
with its local volatility projection, calibrated to the entire family of
one-dimensional marginals: the variance-call inequality is strict at every positive
expiry and every positive strike.

Related convex-order inversions for squared VIX are investigated by
Guyon~\cite{Guyon2020} and established in explicitly constructed
stochastic volatility models by Acciaio and Guyon~\cite{AcciaioGuyon2020}.
These comparisons concern a conditional expectation of future integrated
variance and do not by themselves imply an ordering of realized integrated
variance.  In numerical examples for Bergomi models, Guyon finds lower prices
in the associated local volatility model for calls on forward realized
variance~\cite[Section~6.1.2 and Remark~9]{Guyon2020}.
Here we prove the opposite realized-variance ordering over $[0,T]$ in the
classical Heston model with $\rho=-1$, for arbitrary positive parameters,
with strict call inequalities for every $T>0$ and $K>0$.

All well-posedness, regularity,
approximation, localization, and support statements needed to justify the
informal calculation outlined above are established in the subsequent sections.
In Section~2 we state the coefficient regularity and uniqueness results and
prove weak existence for the projected Heston SDE.  Section~3
establishes comparison results.
Appendix~\ref{app:densities} constructs the joint law of the variance and
the integrated variance, which is used for the local-variance surface and
for the strict comparison.
Appendix~\ref{app:uniqueness} proves uniqueness
in law for the projected Heston SDE.
Appendix~\ref{app:comparison} collects the technical results
needed for the main theorems.
The present note treats only $\rho=-1$; the cases $\rho=+1$ and
$|\rho|<1$ are topics of current research.

\section{The two models}

We work under a risk-neutral measure, with zero interest rates and zero
dividends.  The stochastic volatility model of Heston~\cite{Heston1993} is
given by
\begin{align}
  \frac{\dd S_t}{S_t}&=\sqrt{v_t}\,\dd W_t, \label{eq:heston-S}\\
  \dd v_t&=\kappa(\theta-v_t)\,\dd t
       +\xi\sqrt{v_t}\,\dd Z_t, \label{eq:heston-v}\\
  \dd\langle W,Z\rangle_t&=\rho\,\dd t, \notag
\end{align}
where $S_0>0$, $v_0>0$, $\kappa,\theta,\xi>0$, and $\rho\in[-1,1]$.
For the main result,
\begin{equation*}
                         \rho=-1,
\end{equation*}
so going forward we take $Z=-W$.  Define the
Heston integrated variance by
\begin{equation}
                         I_t^{\mathrm H}:=\int_0^t v_u\,\dd u,
 \label{eq:heston-integrated-variance}
\end{equation}
and set
\[
                         L_t:=\log S_t.
\]
From \eqref{eq:heston-S} and \eqref{eq:heston-integrated-variance}, it follows
\begin{equation}
  L_t-\log S_0=-\frac12 I_t^{\mathrm H}
       +\int_0^t\sqrt{v_u}\,\dd W_u.                \label{eq:logS}
\end{equation}
On the other hand, since $Z=-W$, integration of \eqref{eq:heston-v}
gives
\begin{equation}
  v_t-v_0=\kappa\theta t-\kappa I_t^{\mathrm H}
       -\xi\int_0^t\sqrt{v_u}\,\dd W_u.             \label{eq:int-cir}
\end{equation}
Eliminating the stochastic integral between \eqref{eq:logS} and
\eqref{eq:int-cir}, we obtain
\begin{equation}
 v_t=v_0+\kappa\theta t-\xi\log(S_t/S_0)
       -\left(\kappa+\frac{\xi}{2}\right)I_t^{\mathrm H}.
                                                               \label{eq:pathwise}
\end{equation}
This identity underlies the Heston--Nandi extremality argument in
Cox--Wang~\cite[Section~6, Eq.~(6.2)]{CoxWang2013bounds};
its general-correlation counterpart underlies the Monte
Carlo simulation scheme of Broadie and Kaya~\cite[Section~3, Eqs.~(6)--(7) and~(23)]{BroadieKaya2006}.

For later use, write
\begin{equation}
  \beta:=\kappa+\frac{\xi}{2}>0,
  \qquad A(t,s):=v_0+\kappa\theta t-\xi\log(s/S_0).
  \label{eq:beta-A}
\end{equation}
Then $v_t=A(t,S_t)-\beta I_t^{\mathrm H}$.  Since $v_t\geq0$ and
$I_t^{\mathrm H}>0$ for $t>0$, the Heston spot satisfies
$A(t,S_t)=v_t+\beta I_t^{\mathrm H}>0$ almost surely: at every positive
time it lies strictly below the moving barrier
\begin{equation}
 s^*(t):=S_0\,e^{(v_0+\kappa\theta t)/\xi}.
 \label{eq:spot-barrier}
\end{equation}
On each finite horizon, $s^*(t)\leq s^*(T)$, so $S$ is a bounded local
martingale and hence a true martingale.

We now introduce the corresponding Markovian-projection
local volatility model. Let
\begin{equation}
  \lambda(t,s):=\E[v_t\mid S_t=s],\qquad t>0,\quad s>0.       \label{eq:lambda}
\end{equation}
The following proposition specifies the version used throughout.
The proof is given in Appendix~\ref{app:densities}.

\begin{proposition}\label{prop:surface}
There exists a nonnegative function
$\lambda:[0,\infty)\times(0,\infty)\to[0,\infty)$ with the following
properties.
\begin{enumerate}[label=\textup{(\roman*)}]
\item For every $t>0$, $\lambda(t,\cdot)$ is a version of
$\E[v_t\mid S_t=\cdot]$.
\item $\lambda$ is jointly continuous on
$[0,\infty)\times(0,\infty)$, and
\begin{equation*}
 \lambda(0,s)=\bigl(v_0-\xi\log(s/S_0)\bigr)^+.
\end{equation*}
\item For every $t\geq0$, $\lambda(t,s)>0$ for $0<s<s^*(t)$ and
$\lambda(t,s)=0$ for $s\geq s^*(t)$.
\item For each finite $T>0$ there is
a constant $C_\lambda(T)<\infty$ such that
\begin{equation*}
 \begin{gathered}
 0\leq\lambda(t,s)\leq A(t,s)^+,
 \qquad 0\leq\lambda(t,e^x)\leq C_\lambda(T)(1+|x|),\\
 t\in[0,T],\qquad s>0,\qquad x\in\R.
 \end{gathered}
\end{equation*}
\end{enumerate}
\end{proposition}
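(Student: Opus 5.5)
The plan is to build $\lambda$ directly from the pathwise identity \eqref{eq:pathwise}. Since $v_t=A(t,S_t)-\beta I_t^{\mathrm H}$, it is enough to understand the conditional law of $I_t^{\mathrm H}$ given $S_t$. Equivalently, we can work with the joint law of $(v_t,I_t^{\mathrm H})$, because $\log(S_t/S_0)=(v_0+\kappa\theta t-v_t-\beta I_t^{\mathrm H})/\xi$ is an affine function of that pair.

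The first step, which is carried out in Appendix~\ref{app:densities}, is to show that for $t>0$ the pair $(v_t,I_t^{\mathrm H})$ has a density $p_t(v,i)$ on $(0,\infty)^2$, jointly continuous in $(t,v,i)$ for $t>0$, with tails controlled uniformly on compact time intervals. I would try to get this from the known affine transform $\E[e^{-a v_t-b I_t^{\mathrm H}}]$ of the CIR pair, via Fourier inversion. Alternatively, one can apply a Girsanov change to a squared Bessel bridge representation. The change of variables $(v,i)\mapsto(x,i)$ with $x=\log s$ and $v=A(t,e^x)-\beta i$ has constant Jacobian $\xi$. This gives the joint density of $(L_t,I_t^{\mathrm H})$ as $q_t(x,i)=\xi\,p_t\bigl(A(t,e^x)-\beta i,\,i\bigr)\mathbf 1\{0<i<A(t,e^x)/\beta\}$. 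We then define
\[
 \lambda(t,e^x):=\frac{\int_0^{A/\beta}(A-\beta i)\,q_t(x,i)\dd i}{\int_0^{A/\beta}q_t(x,i)\dd i}
\]
wherever the denominator is positive, and $\lambda:=0$ otherwise. With this definition, (i) holds by the usual disintegration argument.

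For (iii) and the first bound in (iv): the integrand $A-\beta i$ lies in $[0,A]$, so $0\le\lambda\le A^+$. For $s\ge s^*(t)$ we have $A\le0$, so the interval of integration is empty and $\lambda=0$. For $s<s^*(t)$ the value of $\lambda$ is a strict average of positive numbers, provided the marginal density of $L_t$ is positive there. That positivity needs a support statement: the law of $(v_t,I_t^{\mathrm H})$ should have full support on the open quadrant. I would obtain it from the Stroock--Varadhan support theorem applied to the control system $\dot v=\kappa(\theta-v)-\xi\sqrt v\,\dot h$, $\dot I=v$. By steering $v$, any target with $v_t>0$ and $0<I_t$ is reachable. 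The linear growth bound in (iv) follows from $A(t,e^x)^+\le v_0+\kappa\theta T+\xi|x|$.

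The main obstacle is (ii): joint continuity, including at $t=0$ and at the barrier. For $t>0$ and $s<s^*(t)$, continuity of the ratio follows from continuity of $p_t$, dominated convergence, and positivity of the denominator. Near the barrier we do not need continuity of the ratio itself, since $0\le\lambda\le A^+\to0$. As $t\downarrow0$, $I_t^{\mathrm H}\to0$ while $\lambda(t,s)\le A(t,s)^+$, and the integrand satisfies $A-\beta i\ge A-\beta\,\E[\cdot]$. The difficulty is showing that the conditional law of $I_t^{\mathrm H}$ given $L_t=x$ concentrates at $0$ uniformly in $x$ on compacts. My first attempt would use the bound $I_t^{\mathrm H}<A(t,e^x)/\beta$ for the conditional support, combined with a scaling argument: for small $t$, $I_t\approx v_0 t$ for typical paths, while for the atypical $x$ that are reached, large-deviation-type estimates on the density ratio should still force $\E[I_t\mid L_t=x]\to0$. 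If the uniformity fails, I would instead establish the bound $\E[I_t^{\mathrm H}\mid L_t=x]\le Ct$ locally uniformly from explicit small-time density asymptotics. This yields $\lambda(t,s)\to(v_0-\xi\log(s/S_0))^+$.
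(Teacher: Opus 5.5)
\emph{Overall verdict:} the construction is the right one, but the two steps that carry the weight, strict positivity and continuity at $t=0$, are not proved.

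\emph{What matches the paper.} Your architecture is the paper's. You set $\lambda(t,s)=\phi_t(A(t,s))$, where $\phi_t(q)$ is the ratio of the integrals of $v\,p$ and $p$ along the segment $\{v+\beta i=q\}$, with $p$ the joint density of $(v_t,I_t^{\mathrm H})$. The bound $0\le\lambda\le A^+$, the vanishing beyond $s^*(t)$ and the squeeze argument at the barrier are exactly as in the paper.

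\emph{First gap: positivity.} The support theorem only tells you that $\Law(v_t,I_t^{\mathrm H})$ charges every open subset of the quadrant. Together with continuity, this makes the joint density positive on a dense open set. Such a set can still miss an entire segment $\{v+\beta i=A\}$, so the marginal density of $L_t$ may vanish at some $x$ below the barrier; a continuous density with full support can have zeros. Item (iii) needs the denominator to be positive at \emph{every} $s<s^*(t)$. So does the interior continuity in (ii), under your convention $\lambda:=0$. What you actually need is pointwise strict positivity of the joint density. The paper reads this off the factorization $p_t(v)f_t(v,i)$ of Lemma~\ref{lem:cir-joint}:
\begin{itemize}
\item $p_t>0$ from the Bessel series;
\item $f_t(v,i)=\E[g_t^{(M)}(i-\mathcal R_{t,v})]>0$, because the gamma block has a positive density on $(0,\infty)$ and $\mathbb P(\mathcal R_{t,v}<\varepsilon)>0$ for every $\varepsilon>0$.
\end{itemize}

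\emph{A related problem with Fourier inversion.} Global Fourier inversion of the affine transform cannot deliver a continuous density when $\alpha=2\kappa\theta/\xi^2<1$, and no Feller condition is assumed. In that case the density behaves like $v^{\alpha-1}$ as $v\downarrow0$, so it is unbounded and its characteristic function is not integrable. Your dominated-convergence step for continuity of numerator and denominator also needs a domination like \eqref{eq:cir-boundary-domination} together with a uniform bound on $f_t$, and you do not supply one. The Bessel-bridge route you mention is the one that works; it underlies the Glasserman--Kim expansion the paper uses.

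\emph{Second, more serious gap: continuity at $t=0$.} You correctly identify this as the main obstacle, but leave it at the level of ``should still force''. The unconditional bound $\E I_t^{\mathrm H}=O(t)$ does not help. For $q\ne v_0$ the conditioning event $\{Q_t\approx q\}$ has probability of order $e^{-c/t}$, so any conditional estimate must beat that exponential cost.

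The missing idea (Lemma~\ref{lem:U0}) is an exponential moment at the faster scale $t^{-2}$:
\begin{itemize}
\item In the gamma series the rates satisfy $\gamma_n(t)\asymp n^2/t^2$. With $\eta_0=\pi^2/\xi^2$ this gives $\E e^{\eta_0U_{t,v}/t^2}\le Ce^{D/t}$ uniformly for $v$ in compacts, hence $e^{\eta_0 i/t^2}f_t(v,i)\le Ct^{-2}e^{D/t}$.
\item Pair this with a lower bound $g_t(q)\ge ce^{-D_3/t}$ for $q$ in compacts. It comes from lower bounds on $p_t$ and on $f_t$ in the window $R_0t\le i\le 2R_0t$.
\item Together these give $\E[e^{\eta_0I_t^{\mathrm H}/t^2}\mid Q_t=q]\le A_{\mathcal K}t^{-2}e^{B_{\mathcal K}/t}$.
\item Jensen's inequality then yields $\E[I_t^{\mathrm H}\mid Q_t=q]\le t^2\eta_0^{-1}\log(A_{\mathcal K}t^{-2}e^{B_{\mathcal K}/t})\le C_{\mathcal K}t$, that is, $\phi_t(q)\to q$ locally uniformly.
\end{itemize}
You need a two-sided quantitative estimate of this kind: an upper bound on the joint density carrying the weight $e^{\eta_0 i/t^2}$, and a lower bound on the marginal density at exponential scale $1/t$. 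Without it, your argument does not establish continuity of $\lambda$ at $t=0$, and hence does not prove (ii).
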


The associated Markovian projection
local volatility model is
\begin{equation}
  \frac{\dd\bar S_t}{\bar S_t}
     =\sqrt{\lambda(t,\bar S_t)}\,\dd B_t,
  \qquad \bar S_0=S_0.                              \label{eq:lv-S}
\end{equation}

The next lemma establishes the existence of a weak mimicking solution of \eqref{eq:lv-S},
which is essential for usability of this model.
\begin{lemma}\label{lem:mimicking-existence}
For every finite $T>0$, \eqref{eq:lv-S}, with the coefficient of
Proposition~\ref{prop:surface}, has a continuous strictly positive weak
solution $\bar S$ whose one-dimensional marginals agree with those of $S$ on $[0,T]$.
This solution is a true martingale.
\end{lemma}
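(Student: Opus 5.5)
We will derive the lemma from the mimicking theorem of Brunick and Shreve~\cite[Corollary~3.7]{BrunickShreve2013}, applied to the Heston log-price. By \eqref{eq:logS}, $L_t=\log S_t$ is an It\^o process with drift $-\tfrac12 v_t$ and diffusion coefficient $\sqrt{v_t}$. The integrability condition $\E\int_0^T (v_u+v_u)\,\dd u<\infty$ holds because the CIR process has finite first moments on compacts. The mimicking theorem then produces a weak solution $\bar L$ of
\[
 \dd\bar L_t=-\tfrac12\,\hat\lambda(t,\bar L_t)\,\dd t+\sqrt{\hat\lambda(t,\bar L_t)}\,\dd B_t,\qquad \bar L_0=\log S_0,
\]
with $\hat\lambda(t,x)$ any version of $\E[v_t\mid L_t=x]$, and with $\Law(\bar L_t)=\Law(L_t)$ for every $t\in[0,T]$. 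Here the drift and the squared diffusion are the same conditional expectation. So one projected coefficient, chosen Borel, serves for both terms.

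By Proposition~\ref{prop:surface}(i), for each $t>0$ the function $x\mapsto\lambda(t,e^x)$ is a version of $\E[v_t\mid L_t=x]$. I will replace $\hat\lambda$ by this fixed continuous version. The justification is that the solution's law depends on the coefficient only through its values on the support of $\Law(L_t)=\Law(\bar L_t)$, for Lebesgue-a.e.\ $t$. More concretely, $\int_0^T|\hat\lambda-\lambda|(t,\bar L_t)\,\dd t$ has expectation $\int_0^T\E|\hat\lambda-\lambda|(t,L_t)\,\dd t=0$, by Tonelli and the marginal identity. Hence the drift and quadratic variation are unchanged a.s. The value at $t=0$ is irrelevant for the same reason.

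Set $\bar S_t:=e^{\bar L_t}$. It is continuous and strictly positive by construction, and It\^o's formula gives \eqref{eq:lv-S}. Its marginals are those of $S$.

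For the martingale property, $\bar S$ is a nonnegative local martingale, hence a supermartingale. It has constant mean, since $\E\bar S_t=\E S_t=S_0$ because $S$ is a true martingale. A supermartingale with constant expectation is a martingale. Alternatively, $\bar S_t<s^*(T)$ a.s.\ for each $t$, because the marginals match. Continuity then gives a uniform bound on $[0,T]$ outside a null set, so $\bar S$ is a bounded local martingale.

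The main obstacle is verifying the hypotheses of the mimicking theorem precisely: the measurability and integrability requirements, and the identification of the weak solution with the chosen continuous version of $\lambda$ rather than an arbitrary Borel version. The linear growth bound in Proposition~\ref{prop:surface}(iv) helps here, since it ensures $\int_0^T\lambda(t,\bar S_t)\,\dd t<\infty$ a.s. Everything else is routine.
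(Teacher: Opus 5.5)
Your proposal is correct and follows essentially the same route as the paper. You apply Brunick--Shreve's Corollary~3.7 to $\log S$, replace the projected coefficient by the continuous version $\mu(t,x)=\lambda(t,e^x)$ using Tonelli and equality of marginals, exponentiate via It\^o's formula, and conclude the martingale property from a nonnegative local martingale with constant mean $S_0$. For the stochastic-integral part of the version swap, the precise null-set statement is $\int_0^T(\sqrt{\hat\lambda}-\sqrt{\lambda})^2(t,\bar L_t)\dd t=0$ a.s., which follows from your Tonelli argument because $(\sqrt a-\sqrt b)^2\leq|a-b|$.
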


\begin{proof}
Standard localization,
the Burkholder--Davis--Gundy inequality and
Gronwall's lemma for the CIR equation give, for every $p\geq1$ and
finite $T$,
\begin{equation}
 \E\!\left[\sup_{t\leq T}v_t^p\right]
       +\E[(I_T^{\mathrm H})^p]<\infty.
 \label{eq:cir-moments}
\end{equation}

Proposition~\ref{prop:surface} and \eqref{eq:cir-moments} allow us to apply
Brunick--Shreve~\cite[Corollary~3.7]{BrunickShreve2013}
to $L=\log S$.  Write $\mu(t,x):=\lambda(t,e^x)$.
The conditional drift and covariance may be taken as
$-\mu/2$ and $\mu$: changing Borel versions leaves their time integrals
unchanged along a process with the same marginals as $L$.
This gives a continuous weak solution
\[
 \dd\bar L_t=-\tfrac12\mu(t,\bar L_t)\,\dd t
                +\sqrt{\mu(t,\bar L_t)}\,\dd B_t,
 \qquad \bar L_0=\log S_0,
\]
with the same one-dimensional marginals as $L$.  By It\^o's formula,
$\bar S=e^{\bar L}$ is a strictly positive solution of \eqref{eq:lv-S}, with
\begin{equation*}
 \Law(\bar S_t)=\Law(S_t),\qquad 0\leq t\leq T.
\end{equation*}
Thus $\bar S$ is a nonnegative local martingale with
$\E[\bar S_t]=\E[S_t]=S_0$, and hence a true martingale.
\end{proof}

Define the accumulated variance of the projected model by
\[
 I_t^{\mathrm{LV}}:=\int_0^t\lambda(u,\bar S_u)\,\dd u.
\]

\begin{theorem}\label{thm:timezero}
For every finite $T>0$, uniqueness in law holds for \eqref{eq:lv-S}
on $[0,T]$ among continuous strictly positive weak solutions.
The solution has the Heston one-dimensional marginals and remains strictly
below $s^*(t)$ on $[0,T]$, almost surely.  In particular, the law of
$(\bar S,I^{\mathrm{LV}})$ on $C([0,T];\R)^2$ is uniquely determined.
\end{theorem}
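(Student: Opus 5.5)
We work throughout in log coordinates, $\bar L=\log\bar S$, with $\mu(t,x)=\lambda(t,e^x)$ and $x^*(t):=\log s^*(t)=\log S_0+(v_0+\kappa\theta t)/\xi$. By Proposition~\ref{prop:surface}(iv) we have $\mu(t,x)\le A(t,e^x)^+=\xi\,(x^*(t)-x)^+$. It is therefore natural to study the gap $Y_t:=x^*(t)-\bar L_t$, which satisfies
\[
 \dd Y_t=\Bigl(\frac{\kappa\theta}{\xi}+\tfrac12\mu(t,\bar L_t)\Bigr)\dd t-\sqrt{\mu(t,\bar L_t)}\,\dd B_t ,
 \qquad 0\le\mu(t,\bar L_t)\le \xi\,Y_t^+ .
\]
The plan has four steps. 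First, show that any continuous positive weak solution satisfies $Y\ge0$. Second, prove uniqueness in law up to the first time $Y$ hits $0$. Third, control the behaviour at the barrier. Fourth, read off the remaining claims.

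\emph{Step 1 (staying below the barrier).} Suppose $Y_t<0$ at some $t$, and let $\tau:=\sup\{u<t:Y_u\ge0\}$. Note that $\tau$ exists, since $Y_0=v_0/\xi>0$. On $(\tau,t]$ we have $\mu=0$ by Proposition~\ref{prop:surface}(iii), so $\bar L$ is constant there. Then $Y$ is strictly increasing on $[\tau,t]$, because $\dd Y=(\kappa\theta/\xi)\dd t$. This contradicts $Y_\tau=0>Y_t$. Hence $\bar L_t\le x^*(t)$ for all $t$ almost surely, and this uses only the structure of $\lambda$.

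\emph{Step 2 (interior uniqueness).} Consider the open set $D:=\{(t,x):x<x^*(t)\}$. On $D$, $\mu$ is continuous and strictly positive, hence bounded above and away from $0$ on compacts. Moreover $\mu$ grows at most linearly in $|x|$ by Proposition~\ref{prop:surface}(iv), which rules out explosion to $-\infty$ via a standard Lyapunov/Gronwall bound on $\E[\sup_{t\le T}|\bar L_t|^2]$. One-dimensional martingale problems with continuous, locally uniformly positive diffusion coefficients are well posed (Stroock--Varadhan). Applying this with the localization theorem along an exhausting sequence of compacts $D_n\uparrow D$ gives uniqueness in law of $\bar L$ stopped at $\sigma:=\inf\{t:Y_t=0\}$. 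The same statement holds for every starting point $(t_0,x_0)\in D$.

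\emph{Step 3 (the barrier; main obstacle).} If we can show $\sigma>T$ almost surely, we are done. The first attempt is the Feller-type test with $\log Y$:
\[
 \dd\log Y=\Bigl(\frac{\kappa\theta/\xi+\mu/2}{Y}-\frac{\mu}{2Y^2}\Bigr)\dd t+\dots
\]
Using $\mu\le\xi Y$, the drift is at least $(\kappa\theta/\xi-\xi/2)/Y$. This settles the case $2\kappa\theta\ge\xi^2$ but not in general, so for arbitrary parameters a different device is needed.

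The fallback is to allow barrier visits and show they cause no non-uniqueness. The key observation is that $\mu\le\xi Y$ makes the martingale part of $Y$ vanish at $0$ while the drift there is at least $\kappa\theta/\xi>0$. Occupation-time arguments then show that $\{t:Y_t=0\}$ is Lebesgue-null and that $Y$ leaves $0$ immediately. One then uses strong-Markov concatenation: from any barrier time, the conditional law of the next excursion is determined through Step 2 started from points $(t,x^*(t)-\varepsilon)$, letting $\varepsilon\downarrow0$. An alternative route I would try in parallel uses the well-posedness of the one-dimensional marginals. By Lemma~\ref{lem:mimicking-existence}, every solution is expected to share the Heston marginals; showing this requires uniqueness for the associated Fokker--Planck equation, and I would use Figalli/Trevisan superposition here. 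Uniqueness of one-dimensional marginals from every starting point then yields uniqueness in law by the Stroock--Varadhan/Markov-selection criterion. Under either route, the Heston marginals satisfy $S_t<s^*(t)$ almost surely for $t>0$, which transfers to $\bar S$ and gives the strict inequality at each fixed time. To obtain it on the whole interval, I would combine this with the null-occupation property; this pathwise strictness at the barrier is the step I expect to be the hardest.

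\emph{Step 4.} The Heston marginals follow from uniqueness together with Lemma~\ref{lem:mimicking-existence}. Finally, $I^{\mathrm{LV}}$ is a measurable functional of the path of $\bar S$ through the fixed Borel function $\lambda$, so the law of $(\bar S,I^{\mathrm{LV}})$ on $C([0,T];\R)^2$ is determined by the law of $\bar S$.
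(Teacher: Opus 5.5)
Your Steps 1, 2 and 4 are sound. Step 1 is not actually needed, and Step 2 is essentially the paper's localization. The gap is Step 3, and none of the routes you sketch closes it.

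\emph{The bound $\mu\le\xi Y$ is too weak.} If $\mu=\xi Y$ near the barrier, then $Z=4Y/\xi$ satisfies $\dd Z=(4\kappa\theta/\xi^2+\xi Z/2)\dd t-2\sqrt Z\,\dd B$. This is locally a squared Bessel process of dimension $4\kappa\theta/\xi^2$. For $2\kappa\theta<\xi^2$ it hits $0$ with positive probability, even though it spends zero Lebesgue time there, leaves $0$ immediately, and is at $0$ with probability zero at each fixed time. So ``fixed-time strictness plus null occupation'' does not give pathwise strictness.

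\emph{The other routes fail as stated.}
\begin{itemize}
\item Concatenation: arbitrary weak solutions are not known to be strong Markov. The zero set of such a process has no ``next excursion''. Letting $\varepsilon\downarrow0$ needs continuity of the laws in the starting point up to the barrier, which is exactly what is unknown.
\item Fokker--Planck: this needs uniqueness for a degenerate equation from every starting point, which superposition only reformulates.
\item Marginals: Lemma~\ref{lem:mimicking-existence} gives Heston marginals for one solution, not for all, so that remark is circular.
\end{itemize}

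\emph{The missing idea:} nonattainment only has to be proved for \emph{one} solution, and mimicking supplies it through a time-integrated quantity, not fixed-time information.
\begin{itemize}
\item Take the solution $\bar S^\star$ of Lemma~\ref{lem:mimicking-existence}, with gap $\bar Q^\star_t=\xi(x^*(t)-\log\bar S^\star_t)$. Tonelli and mimicking give $\E\int_0^T(\bar Q^\star_t)^{-1}\dd t=\int_0^T\E[Q_t^{-1}]\dd t$, where $Q_t=v_t+\beta I_t^{\mathrm H}$.
\item $\sup_{t\le T}\E[Q_t^{-1}]<\infty$ follows from the small-ball bound $\mathbb P(Q_t\le q)\le\exp\{-\beta(v_0-q)^2/(2\xi^2q)\}$. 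To get it, write $Q_t=v_0+\kappa\theta t+\frac\xi2 I_t^{\mathrm H}+\xi M_t$ with $\langle M\rangle=I^{\mathrm H}$, and use the exponential supermartingale of $M$.
\item In It\^o's formula for $\log\bar Q^\star$, the bound $\phi_t(q)\le q$ dominates the drift by $(\kappa\theta+\xi^2/2)/\bar Q^\star+\xi/2$ and the quadratic variation by $\xi^2\int(\bar Q^\star)^{-1}$. Hence $\log\bar Q^\star$ stays finite on $[0,T]$ for all parameters, with no Feller condition.
\item Your Step 2 then finishes the proof. Let $\eta_n$ be the exit time of the gap from $(1/n,n)$, capped at $T$. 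For any other solution, equality of the stopped laws gives $\mathbb P(\eta_n<T)=\mathbb P^\star(\eta_n^\star<T)\to0$. So every solution avoids the barrier on $[0,T]$, and all laws coincide.
\end{itemize}

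This is exactly the paper's argument. The Heston-specific smallness of $\phi_t$ near $0$, which your structural bound cannot see, enters only through $\E[Q_t^{-1}]$.
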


The proof is given in Appendix~\ref{app:uniqueness}.
Taking the conditional expectation of \eqref{eq:pathwise} given $S_t$ gives
\begin{equation*}
  \lambda(t,S_t)
     =A(t,S_t)-\beta\E[I_t^{\mathrm H}\mid S_t].
\end{equation*}
It will be useful to fix the Borel version
\begin{equation}
 \iota(t,s):=\frac{A(t,s)-\lambda(t,s)}{\beta},
 \qquad t>0,\quad s>0.
 \label{eq:iota-definition}
\end{equation}
Thus $\iota(t,S_t)=\E[I_t^{\mathrm H}\mid S_t]$ almost surely for every
$t>0$.
Consequently,
\begin{equation}
  v_t-\lambda(t,S_t)
   =-\beta\bigl(I_t^{\mathrm H}-\iota(t,S_t)\bigr).
                                                        \label{eq:centered-relation}
\end{equation}
This identity remains valid when $v_t=0$.

\begin{remark*}
Ewald~\cite[Theorem~5.1]{Ewald2005} obtains a Malliavin-calculus
representation of the Heston conditional variance under the Feller
condition and for $|\rho|<1$.
Atlan~\cite[Section~3.2, Eq.~(20)]{Atlan2006} derives the
conditional-variance formula \eqref{eq:lambda}, explains its connection
with Gy\"ongy's mimicking theorem, and develops projection calculations
using Bessel processes and their relation to CIR variance in
\cite[Section~4]{Atlan2006}. 
His resutls are also stated for $|\rho|<1$, so they are not directly applicable to the Heston--Nandi model studied here.

For $\rho=-1$, the projected coefficient vanishes at the moving
upper boundary.  Our approximation argument uses the continuity,
including at time zero, and growth bounds in
Proposition~\ref{prop:surface}, together with uniqueness in law from
$(0,S_0)$ in Theorem~\ref{thm:timezero}; these properties are established
without a Feller restriction.  Finally, matching the one-dimensional
spot marginals does not determine the law of accumulated variance.
Its convex ordering and the strict inequalities of
Theorems~\ref{thm:main} and~\ref{thm:strict} require the separate
comparison argument below.
\end{remark*}

\section{The comparison identity and convex order}
\label{sec:comparison}

Throughout this section, $T>0$ is fixed.

\begin{theorem}\label{thm:main}
For the Heston--Nandi model of Section~2 and every $T>0$,
\begin{equation}
 \E[(I_T^{\mathrm H}-K)^+]
 \leq\E[(I_T^{\mathrm{LV}}-K)^+],\qquad K\in\R.
 \label{eq:call-ineq}
\end{equation}
Moreover,
\begin{equation*}
 I_T^{\mathrm H}\preceq_{\mathrm{cx}}I_T^{\mathrm{LV}}.
\end{equation*}
Equivalently, $\E[\Phi(I_T^{\mathrm H})]\leq
\E[\Phi(I_T^{\mathrm{LV}})]$ for every convex
$\Phi:\R_+\to\R$ for which both expectations are finite.
\end{theorem}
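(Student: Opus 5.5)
The plan is to make the informal Duhamel argument of the introduction rigorous. I would first prove $\E[\Phi(I_T^{\mathrm H})]\leq\E[\Phi(I_T^{\mathrm{LV}})]$ for smooth, nondecreasing, convex $\Phi$ with bounded derivatives, and then pass to calls and to general convex payoffs.

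\textbf{Smoothing the coefficient.} Proposition~\ref{prop:surface} gives only continuity of $\lambda$, with degeneracy at the moving barrier $s^*(t)$. I would therefore work with smooth, strictly positive, bounded approximations $\lambda_n\to\lambda$ locally uniformly on $[0,T]\times(0,\infty)$ in log-coordinates, each satisfying the linear growth bound of Proposition~\ref{prop:surface}(iv) uniformly in $n$. For each $n$, let $V_n(t,s,i)$ be the value function of the payoff $\Phi(i+\int_t^T\lambda_n\,\dd u)$ in the model driven by $\lambda_n$. Classical parabolic theory makes $V_n$ a smooth solution of
\[
V_t+\lambda_n\bigl(\tfrac12 s^2V_{ss}+V_i\bigr)=0.
\]
Set $H_n:=\tfrac12 s^2\partial_{ss}V_n+\partial_iV_n$. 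I would show $H_n\geq0$ and that $H_n$ is nondecreasing in $i$; this is the content of Lemma~\ref{lem:smooth} cited in the introduction. The argument differentiates the PDE: $\partial_iV_n$ and $s^2\partial_{ss}V_n$ satisfy linear equations whose Feynman--Kac representations involve $\Phi'\geq0$ and $\Phi''\geq0$, and monotonicity in $i$ follows because $\partial_i H_n$ solves the same kind of equation with a nonnegative terminal datum.

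\textbf{Itô step on the Heston path.} Apply Itô's formula to $V_n(t,S_t,I_t^{\mathrm H})$ along the Heston dynamics. Since $S$ is bounded on $[0,T]$ and $\Phi$ has bounded derivatives, the stochastic integral is a true martingale, and we get
\[
V_n(0,S_0,0)-\E[\Phi(I_T^{\mathrm H})]=\int_0^T\E\bigl[(\lambda_n(t,S_t)-v_t)\,H_n(t,S_t,I_t^{\mathrm H})\bigr]\dd t .
\]
Split the integrand as $(\lambda_n-\lambda)H_n+(\lambda-v_t)H_n$. For the second term, use \eqref{eq:centered-relation} to write
\[
\lambda(t,S_t)-v_t=\beta\bigl(I_t^{\mathrm H}-\iota(t,S_t)\bigr).
\]
Conditioning on $S_t=s$ and applying the conditional Chebyshev covariance inequality (two functions nondecreasing in $i$, one with zero conditional mean) shows this term is $\geq0$.

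\textbf{Passage to the limit.} For the error term, $H_n$ is bounded uniformly in $n$ on compacts, via a priori bounds from the Feynman--Kac representation together with \eqref{eq:cir-moments}. Since $\lambda_n\to\lambda$ locally uniformly and the growth bounds hold uniformly, dominated convergence sends the error term to $0$. It remains to show $V_n(0,S_0,0)\to\E[\Phi(I_T^{\mathrm{LV}})]$. For this I would use tightness of the laws of the approximating $(\bar S^n,I^n)$, continuity of $\lambda$ (including at $t=0$), and the uniqueness in law of Theorem~\ref{thm:timezero}, which identifies every limit point as the law of $(\bar S,I^{\mathrm{LV}})$. Uniform integrability comes from the linear growth bound.

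\textbf{Main obstacle.} I expect this limit identification to be the hardest step. The limit coefficient degenerates at $s^*(t)$, so a limit process could a priori be absorbed or linger at the barrier. I would first try to control the barrier using the bound $\lambda\leq A^+$ from Proposition~\ref{prop:surface}(iv), and then invoke uniqueness in law among strictly positive solutions.

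\textbf{From smooth payoffs to the theorem.} Once the inequality holds for smooth nondecreasing convex $\Phi$, approximate $(x-K)^+$ by such functions to obtain \eqref{eq:call-ineq}; for $K\leq0$ both sides equal the common mean $-K+\E[I_T]$. Equality of means \eqref{eq:equal-means} together with the call inequalities gives convex order, by the standard characterization through stop-loss transforms.
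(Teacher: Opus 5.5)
Your architecture (smoothing, Duhamel identity along the Heston path, conditional Chebyshev, limit identification via Theorem~\ref{thm:timezero}, then calls and equal means) matches the paper's. The gap is in the ``Passage to the limit'' step for the error term
\[
\int_0^T\E\bigl[(\lambda_n-\lambda)(t,S_t)\,H_n(t,S_t,I_t^{\mathrm H})\bigr]\dd t .
\]
You assert that $H_n$ is bounded uniformly in $n$ via Feynman--Kac. That representation (Lemma~\ref{lem:smooth}) carries the potential $\varpi_n=\tfrac12(\partial_{xx}\nu_n-\partial_x\nu_n)$ and only gives $0\leq H_n\leq e^{T\|\varpi_n\|_\infty}\|\Phi'\|_\infty$. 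This bound cannot stay bounded in $n$: $\lambda(0,\cdot)$ has a kink at $s^*(0)$, and $\lambda$ is otherwise only known to be continuous. So the second spatial derivatives of any smooth approximants converging locally uniformly to it must blow up. You offer no other mechanism for a uniform bound, and none is to be expected, since $H_n$ contains the second derivative $\tfrac12 s^2\partial_{ss}V_n$ of a value function whose limiting coefficient is merely continuous. Relatedly, your sketch of $H_n\geq0$ via separate signs of $\partial_iV_n$ and $s^2\partial_{ss}V_n$ is not right. The term $s^2\partial_{ss}V_n$ has no sign in general; positivity comes from the closed equation satisfied by $H_n$ itself, which is exactly where $\varpi_n$ enters. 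Finally, a bound only ``on compacts'' would not supply the dominating function you need near $t=0$, $S_t\downarrow0$ and large $I_t^{\mathrm H}$.

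The missing idea is to choose the approximants as majorants, $\lambda_n\geq\lambda+1/n$, as in Lemma~\ref{lem:stability} (Lipschitz sup-convolution, spatial mollification, plus a constant), still with uniform linear growth and locally uniform convergence. Then $(\lambda_n-\lambda)H_n\geq0$, so the error term has the favorable sign and can simply be dropped. The Duhamel identity and Chebyshev then give $V_n(0,S_0,0)\geq\E[\Phi(I_T^{\mathrm H})]$ for every $n$, using only $H_n\geq0$, its monotonicity in $i$, and fixed-$n$ boundedness. The only limit still needed is $V_n(0,S_0,0)\to\E[\Phi(I_T^{\mathrm{LV}})]$, which is your tightness and martingale-problem argument. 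Your barrier concern there is already settled by Theorem~\ref{thm:timezero}: weak limits in log coordinates are continuous and real-valued, so they correspond to strictly positive spot solutions, among which uniqueness holds.

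There is also a smaller omission. The stop-loss characterization yields the inequality only for convex $\varphi:\R\to\R$. The statement covers every convex $\Phi:\R_+\to\R$ with finite expectations, e.g.\ $-\sqrt{i}$, which has infinite slope at $0$ and does not extend to $\R$. The paper handles these with the martingale coupling $\E[Y\mid X]=X$ and conditional Jensen on $(0,\infty)$, using that both integrated variances are almost surely strictly positive.
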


\begin{proof}
Fix $K\in\R$ and $m\geq1$, and put
\begin{equation*}
 \Phi(i):=\Phi_{m,K}(i):=\frac1m\log(1+e^{m(i-K)}).
\end{equation*}
This function is smooth, nonnegative, nondecreasing, convex and
$1$-Lipschitz, with bounded derivatives of every positive order.
Proposition~\ref{prop:surface}, Lemma~\ref{lem:mimicking-existence} and
Theorem~\ref{thm:timezero} allow us to apply Lemma~\ref{lem:stability}
in logarithmic coordinates.  Let $\nu_k$ be the resulting smooth majorants
of $\mu(t,x)=\lambda(t,e^x)$, and set
$\lambda_k(t,s):=\nu_k(t,\log s)$.  Then
\begin{equation}
 \begin{gathered}
  \lambda(t,s)+\frac1k\leq\lambda_k(t,s)
       \leq C_0(1+|\log s|),\qquad t\in[0,T],\ s>0,\\
  \sup_{(t,s)\in[0,T]\times[a,b]}|\lambda_k(t,s)-\lambda(t,s)|
       \longrightarrow0\qquad(0<a<b<\infty),
 \end{gathered}
 \label{eq:majorants}
\end{equation}
where $C_0$ is independent of $k$.  Exponentiating the approximating
diffusions in Lemma~\ref{lem:stability} gives
\begin{equation}
 \frac{\dd S_u^k}{S_u^k}
       =\sqrt{\lambda_k(u,S_u^k)}\,\dd B_u.
 \label{eq:approximating-sde}
\end{equation}
Each of these equations has a unique strictly positive strong solution
from every deterministic starting point $(t,s)\in[0,T]\times(0,\infty)$.
Define
\begin{equation}
 \begin{aligned}
 V_k(t,s,i)&:=\E_{t,s}^k\!\left[
  \Phi\!\left(i+\int_t^T\lambda_k(u,S_u^k)\,\dd u\right)\right],\\
 H_k(t,s,i)&:=\tfrac12s^2\partial_{ss}V_k(t,s,i)
                         +\partial_iV_k(t,s,i).
 \end{aligned}
 \label{eq:Uk-hk}
\end{equation}
To relate this notation to Lemma~\ref{lem:smooth}, take $\nu=\nu_k$
and $\psi=\Phi$ there.  The chain rule gives
\[
 V_k(t,e^x,i)=U(t,x,i),\qquad H_k(t,e^x,i)=h(t,x,i).
\]
Consequently,
\[
 \partial_tV_k+\lambda_k H_k=0,\qquad V_k(T,s,i)=\Phi(i).
\]
For fixed $k$, the functions $s\partial_sV_k$ and $H_k$ are bounded, and
\[
 |V_k(t,s,i)|\leq c_k(1+|\log s|+i),\qquad i\geq0.
\]
The Feynman--Kac representation in Lemma~\ref{lem:smooth} also gives
$H_k\geq0$ and shows that $i\mapsto H_k(t,s,i)$ is nondecreasing.

For $n\geq1$, define the stopping time
\[
 \varsigma_n:=\inf\bigl\{u\in[0,T]:
          S_u\notin(1/n,n)\ \text{or}\ v_u+I_u^{\mathrm H}\geq n\bigr\}\wedge T,
 \qquad \inf\varnothing:=\infty.
\]
Continuity of $S$ and $v$, together with strict positivity of $S$,
gives $\varsigma_n\uparrow T$
almost surely.  For $0<r<T$, apply It\^o's formula to
$V_k(t,S_t,I_t^{\mathrm H})$, using
\eqref{eq:heston-S} and \eqref{eq:heston-integrated-variance},
on $[0,r\wedge\varsigma_n]$ to obtain
\[
 \begin{aligned}
 &V_k(r\wedge\varsigma_n,S_{r\wedge\varsigma_n},I_{r\wedge\varsigma_n}^{\mathrm H})
       -V_k(0,S_0,0)\\
 &\quad=\int_0^{r\wedge\varsigma_n}
      \bigl(v_t-\lambda_k(t,S_t)\bigr)H_k(t,S_t,I_t^{\mathrm H})\,\dd t\\
 &\qquad\quad+\int_0^{r\wedge\varsigma_n}
      S_t\sqrt{v_t}\,\partial_sV_k(t,S_t,I_t^{\mathrm H})\,\dd W_t.
 \end{aligned}
\]
Here we used the backward equation $\partial_tV_k=-\lambda_k H_k$.
The pathwise identity \eqref{eq:pathwise} and \eqref{eq:majorants} give
$\lambda_k(t,S_t)\leq C_T(1+v_t+I_t^{\mathrm H})$.
Together with boundedness of $s\partial_sV_k$ and $H_k$, the Heston
moment bounds \eqref{eq:cir-moments} therefore imply
\[
 \E\int_0^T\!\left[
  v_tS_t^2|\partial_sV_k(t,S_t,I_t^{\mathrm H})|^2
  +(v_t+\lambda_k(t,S_t))|H_k(t,S_t,I_t^{\mathrm H})|
 \right]\dd t<\infty.
\]
The stopped stochastic integrals therefore converge in $L^2$, and the
drift integrals converge in $L^1$, as $n\to\infty$ and then $r\uparrow T$.
For the terminal value, the bound for $V_k$ and \eqref{eq:pathwise} give
\[
 \bigl|V_k(r\wedge\varsigma_n,S_{r\wedge\varsigma_n},
                         I_{r\wedge\varsigma_n}^{\mathrm H})\bigr|
 \leq C_{k,T}\bigl(1+\sup_{u\leq T}v_u+I_T^{\mathrm H}\bigr).
\]
The right-hand side is integrable by \eqref{eq:cir-moments}.
Continuity of $V_k$ and $V_k(T,s,i)=\Phi(i)$ therefore give $L^1$
convergence of the stopped
terminal value to $\Phi(I_T^{\mathrm H})$ in the same limits.
Taking expectations yields
\begin{equation}
 \E[\Phi(I_T^{\mathrm H})]-V_k(0,S_0,0)
 =\int_0^T\E\!\left[(v_t-\lambda_k(t,S_t))
                   H_k(t,S_t,I_t^{\mathrm H})\right]\dd t.
 \label{eq:majorant-duhamel}
\end{equation}

With $\iota$ defined by \eqref{eq:iota-definition}, set
\[
 C_k(t):=\E\!\left[(I_t^{\mathrm H}-\iota(t,S_t))
                     H_k(t,S_t,I_t^{\mathrm H})\right],\qquad t>0.
\]
If $\eta_{t,s}$ is a regular conditional law of $I_t^{\mathrm H}$ given
$S_t=s$, symmetrization gives, for almost every $s$,
\begin{equation}
 \begin{aligned}
 &\E\!\left[(I_t^{\mathrm H}-\iota(t,s))
       H_k(t,s,I_t^{\mathrm H})\mid S_t=s\right]\\
 &\quad=\frac12\int_{\R_+^2}(i-j)
       \bigl(H_k(t,s,i)-H_k(t,s,j)\bigr)
       \,\eta_{t,s}(\dd i)\eta_{t,s}(\dd j)\geq0.
 \end{aligned}
 \label{eq:chebyshev}
\end{equation}
(See, for example, Stroock~\cite[Exercise~1.1.9(i)]{Stroock2024}.)
Integrating over $s$ shows that $C_k(t)\geq0$.  By
\eqref{eq:centered-relation},
\[
 \begin{aligned}
 &\E[(v_t-\lambda_k(t,S_t))H_k(t,S_t,I_t^{\mathrm H})]\\
 &\quad=-\beta C_k(t)
   -\E[(\lambda_k(t,S_t)-\lambda(t,S_t))H_k(t,S_t,I_t^{\mathrm H})]
   \leq0.
 \end{aligned}
\]
Consequently,
\begin{equation}
 \E[\Phi(I_T^{\mathrm H})]\leq V_k(0,S_0,0).
 \label{eq:majorant-bound}
\end{equation}

Let $I_t^k:=\int_0^t\lambda_k(u,S_u^k)\dd u$, with $S_0^k=S_0$.
Lemma~\ref{lem:stability} and the continuous mapping theorem give
$(S^k,I^k)\Rightarrow(\bar S,I^{\mathrm{LV}})$ on continuous path
space.  The uniform fourth-moment bound
\eqref{eq:approx-integral-fourth-moment} and the linear growth of $\Phi$ imply
\begin{equation}
 V_k(0,S_0,0)=\E[\Phi(I_T^k)]
       \longrightarrow\E[\Phi(I_T^{\mathrm{LV}})].
 \label{eq:majorant-value-limit}
\end{equation}
Passing to the limit in \eqref{eq:majorant-bound} proves the inequality
for $\Phi_{m,K}$.  Since
\begin{equation}
 0\leq\Phi_{m,K}(i)-(i-K)^+\leq\frac{\log2}{m},\qquad i\geq0,
 \label{eq:soft-error}
\end{equation}
letting $m\to\infty$ gives \eqref{eq:call-ineq}.
Finally, using \eqref{eq:lambda}, Tonelli's theorem and mimicking, we obtain
\begin{align}
 \E[I_T^{\mathrm H}]
 &=\int_0^T\E[v_t]\,\dd t
   =\int_0^T\E[\lambda(t,S_t)]\,\dd t\notag\\
 &=\int_0^T\E[\lambda(t,\bar S_t)]\,\dd t
   =\E[I_T^{\mathrm{LV}}]<\infty.
 \label{eq:equal-means}
\end{align}
The call inequalities and equal means characterize convex order;
see Shaked and Shanthikumar~\cite[Theorem~3.A.1]{ShakedShanthikumar2007}.
Both accumulated variances are strictly positive almost surely, since
$v$ and $\lambda(\cdot,\bar S)$ are continuous and start at
$v_0=\lambda(0,S_0)>0$.
By the martingale coupling characterization
\cite[Theorem~3.A.4]{ShakedShanthikumar2007}, their laws admit a coupling
$(X,Y)$ with $\Law(X)=\Law(I_T^{\mathrm H})$,
$\Law(Y)=\Law(I_T^{\mathrm{LV}})$ and $\E[Y\mid X]=X$.
Conditional Jensen's inequality on $(0,\infty)$ then gives the asserted
comparison for every convex $\Phi:\R_+\to\R$ with finite expectations.
\end{proof}

\subsection{Strictness}

The strict comparison uses the following consequence of the joint-law
construction in Appendix~\ref{app:densities}.

\begin{lemma}\label{lem:strict-density}
Let $\beta$ and $A$ be as in \eqref{eq:beta-A}, and let $s^*$ be the
barrier defined in \eqref{eq:spot-barrier}.  Write $p_t$, $f_t$ and $g_t$
for the densities defined in \eqref{eq:cir-density},
\eqref{eq:bridge-convolution} and \eqref{eq:gap-densities}, respectively.
Put
\[
 \mathcal O:=\{(t,s,i):t>0,\ 0<s<s^*(t),\ 0<i<A(t,s)/\beta\}.
\]
\begin{enumerate}[label=\textup{(\roman*)}]
\item For every $t>0$, the law of $(S_t,I_t^{\mathrm H})$ has the density
\begin{equation}
 \varrho_t(s,i)=\frac{\xi}{s}\,
 p_t(A(t,s)-\beta i)f_t(A(t,s)-\beta i,i),\qquad (t,s,i)\in\mathcal O,
 \label{eq:rho-density}
\end{equation}
extended by zero outside $\mathcal O$.
The function $\varrho$ is jointly continuous and strictly positive on
$\mathcal O$.

\item For every $t>0$, the spot density is
\begin{equation}
 f_t^S(s)=\frac{\xi}{s}g_t(A(t,s)),\qquad 0<s<s^*(t).
 \label{eq:spot-density}
\end{equation}
The map $(t,s)\mapsto f_t^S(s)$ is jointly continuous and strictly
positive for $t>0$ and $0<s<s^*(t)$.

\item The formula
\begin{equation}
 \eta_{t,s}(\dd i):=\frac{\varrho_t(s,i)}{f_t^S(s)}\,\dd i,
 \qquad t>0,\quad 0<s<s^*(t),
 \label{eq:conditional-density}
\end{equation}
defines a jointly Borel probability kernel and a regular conditional law
of $I_t^{\mathrm H}$ given $S_t=s$.
\end{enumerate}
\end{lemma}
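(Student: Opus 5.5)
The plan is a change of variables applied to the joint law of $(v_t,I_t^{\mathrm H})$ built in Appendix~\ref{app:densities}. I take as given from that appendix that, for every $t>0$, the pair $(v_t,I_t^{\mathrm H})$ has the density $(y,i)\mapsto p_t(y)f_t(y,i)$ on $\{y>0,\ i>0\}$, where $p_t$ is the CIR transition density \eqref{eq:cir-density} and $f_t(y,\cdot)$ is the bridge density of integrated variance given $v_t=y$ \eqref{eq:bridge-convolution}. I also take as given that $p_t(y)f_t(y,i)$ is jointly continuous and strictly positive in $(t,y,i)$ on $t>0$, $y>0$, $i>0$. Finally, I take $g_t$ in \eqref{eq:gap-densities} to be the density of $v_t+\beta I_t^{\mathrm H}$, namely
\[
g_t(a)=\int_0^{a/\beta}p_t(a-\beta i)\,f_t(a-\beta i,i)\,\dd i .
\]

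For (i), the pathwise identity \eqref{eq:pathwise} gives $(S_t,I_t^{\mathrm H})=\Psi(v_t,I_t^{\mathrm H})$, where
\[
\Psi(y,i)=\bigl(S_0e^{(v_0+\kappa\theta t-y-\beta i)/\xi},\,i\bigr).
\]
This map is a smooth bijection from $(0,\infty)^2$ onto the slice $\mathcal O_t:=\{(s,i):0<s<s^*(t),\ 0<i<A(t,s)/\beta\}$. Its inverse is $y=A(t,s)-\beta i$, $i=i$, and the Jacobian of the inverse is $|\partial_s A|=\xi/s$. The change-of-variables formula then gives \eqref{eq:rho-density}. The event $v_t=0$ has probability zero because $p_t$ is a density, so nothing is lost on the boundary. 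Joint continuity and positivity of $\varrho$ on $\mathcal O$ follow by composing the continuous map $(t,s,i)\mapsto(t,A(t,s)-\beta i,i)$, which sends $\mathcal O$ into $\{t>0,y>0,i>0\}$, with the positive continuous function $p_tf_t$.

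For (ii), I integrate $\varrho_t(s,\cdot)$ over $i\in(0,A(t,s)/\beta)$, which gives \eqref{eq:spot-density} directly. Positivity of $f_t^S$ is immediate, since the integrand is strictly positive on an interval of positive length. Joint continuity of $(t,s)\mapsto g_t(A(t,s))$ is the step I expect to be the main obstacle. It needs dominated convergence with locally uniform integrable bounds on $p_t(y)f_t(y,i)$ near $y\downarrow0$ (the CIR density may blow up there when the Feller condition fails) and near $i\downarrow0$. I would first try to use the continuity and local bounds on $g$ already established in the appendix. Failing that, I would substitute $u=i\beta/A(t,s)$ to fix the integration domain at $(0,1)$, and then dominate using the explicit CIR density bound $p_t(y)\le C\,y^{2\kappa\theta/\xi^2-1}$, which is integrable near $0$, together with local boundedness of $f_t$.

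For (iii), on the set $0<s<s^*(t)$ the denominator $f_t^S(s)$ is positive and the integrand $\varrho_t(s,\cdot)$ integrates to it, so $\eta_{t,s}$ is a probability measure. Joint Borel measurability of $(t,s)\mapsto\eta_{t,s}(B)$ follows from Tonelli's theorem, since $\varrho$ is Borel. For the regular-conditional-law property I verify $\E[\mathbf 1_{\{S_t\in D\}}\mathbf 1_{\{I_t^{\mathrm H}\in B\}}]=\int_D\eta_{t,s}(B)f_t^S(s)\,\dd s$ for Borel $D$ and $B$, and this is just Fubini applied to $\varrho_t$. The complementary set $s\ge s^*(t)$ is $\Law(S_t)$-null by \eqref{eq:spot-barrier}, so it can be ignored.
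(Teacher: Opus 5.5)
Your proposal is correct and follows essentially the same route as the paper. Both apply the change of variables $(s,i)\mapsto(A(t,s)-\beta i,i)$, with absolute Jacobian $\xi/s$, to the joint density $p_t(v)f_t(v,i)$ of Lemma~\ref{lem:cir-joint}, then integrate in $i$ to obtain $g_t(A(t,s))$. Continuity and positivity of $g_t(A(t,s))$ come from Proposition~\ref{prop:gap-densities}, whose proof is exactly the dominated-convergence argument with $p_t(v)\leq Cv^{\alpha-1}$ and bounded $f$ that you sketch as a fallback. The density ratio then serves as the conditional kernel. The only cosmetic difference is that the paper extends $\eta_{t,s}$ by $\delta_0$ for $s\notin(0,s^*(t))$ to get a kernel on all of $(0,\infty)\times\R$, whereas you simply discard that $\Law(S_t)$-null set.
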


\begin{proof}
By Lemma~\ref{lem:cir-joint}, $(v_t,I_t^{\mathrm H})$ has density
$p_t(v)f_t(v,i)$.  The change of variables
$(s,i)\mapsto(A(t,s)-\beta i,i)$ has absolute Jacobian $\xi/s$,
which gives \eqref{eq:rho-density}.  Its continuity and positivity hold
on $\mathcal O$, where both arguments are strictly positive.  Integration
in $i$ and \eqref{eq:gap-densities} give \eqref{eq:spot-density};
Proposition~\ref{prop:gap-densities} gives its continuity and positivity.
The density ratio \eqref{eq:conditional-density} integrates to one and
satisfies the defining conditional identity.  To define $\eta_{t,s}$
for every $s\in\R$ at each $t>0$, set $\eta_{t,s}:=\delta_0$ when
$s\notin(0,s^*(t))$, where $\delta_0$ is the unit point mass at $i=0$.
Since $\mathbb P(0<S_t<s^*(t))=1$, assigning these values outside the
interval does not change the conditional identity.
The density ratio is jointly Borel on its domain, and the region
$\{(t,s):t>0,\ 0<s<s^*(t)\}$ is Borel.  Integration in $i$ and the
constant extension therefore show that, for every Borel set
$E\subset\R_+$, the map $(t,s)\mapsto\eta_{t,s}(E)$ is Borel measurable
on $(0,\infty)\times\R$.
\end{proof}

\begin{theorem}\label{thm:strict}
For the Heston--Nandi model of Section~2, every $T>0$ and every $K>0$,
\begin{equation}
 \E[(I_T^{\mathrm H}-K)^+]<\E[(I_T^{\mathrm{LV}}-K)^+].
 \label{eq:strict-call}
\end{equation}
Consequently,
\begin{equation}
 I_T^{\mathrm H}\prec_{\mathrm{cx}}I_T^{\mathrm{LV}}.
 \label{eq:strict-cx}
\end{equation}
For every strictly convex $\Psi:\R_+\to\R$ with
$\Psi(I_T^{\mathrm H}),\Psi(I_T^{\mathrm{LV}})\in L^1$,
\begin{equation}
 \E[\Psi(I_T^{\mathrm H})]<\E[\Psi(I_T^{\mathrm{LV}})].
 \label{eq:all-strict-convex}
\end{equation}
\end{theorem}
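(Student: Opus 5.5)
The plan is to sharpen the Duhamel identity \eqref{eq:majorant-duhamel} so that it yields a strictly positive gap that survives the limit $k\to\infty$ and then $m\to\infty$. Fix $T>0$ and $K>0$. Taking $k\to\infty$ in \eqref{eq:majorant-duhamel}, with \eqref{eq:majorant-value-limit}, gives
\[
 \E[\Phi(I_T^{\mathrm{LV}})]-\E[\Phi(I_T^{\mathrm H})]
 \geq\beta\int_0^T\liminf_{k}C_k(t)\,\dd t .
\]
Here I will use Fatou's lemma, which is legitimate because $C_k\geq0$; the correction term $\E[(\lambda_k-\lambda)H_k]$ is also nonnegative. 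It will be more robust, however, to work directly with calls. I will pass to the limit in the value functions themselves: let $V$ denote the local-volatility value function of the call payoff $(i-K)^+$, defined through the unique-in-law solution of Theorem~\ref{thm:timezero}. I aim to identify a strictly positive lower bound for the symmetrized integral \eqref{eq:chebyshev} that is uniform in $k$ and $m$ on a fixed compact set of $(t,s,i)$.

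The key step is a quantitative lower bound on the increments $H_k(t,s,i)-H_k(t,s,j)$. By the Feynman--Kac representation in Lemma~\ref{lem:smooth}, $\partial_iH_k$ is an expectation of a nonnegative quantity involving $\Phi''$ evaluated at $i+\int_t^T\lambda_k(u,S^k_u)\dd u$, for instance $\E[\Phi''(\cdot)\cdot(\text{positive weight})]$. For the call smoothing $\Phi_{m,K}$, $\Phi''$ concentrates near $K$, and in the limit one obtains an expression proportional to the density of $i+\int_t^T\lambda(u,\bar S_u)\dd u$ at $K$. So I would choose a small time window $t\in[t_0,t_1]\subset(0,T)$, a compact $s$-interval inside $(0,s^*(t))$, and two disjoint $i$-intervals $J_1<J_2$ in $(0,A(t,s)/\beta)$ with $\sup J_2<K$. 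On this set I would show
\[
 \int_{J_1}\!\int_{J_2}\bigl(H_k(t,s,j)-H_k(t,s,i)\bigr)\dd j\,\dd i\geq c>0
\]
uniformly in large $k$ and $m$. Rather than differentiate, I would write $\int_{J_2}H\,\dd j-\int_{J_1}H\,\dd i$ via the identity $\int H\,\dd i\approx$ differences of $\partial_iV_k$ plus $\tfrac12 s^2\partial_{ss}V_k$ terms, and reduce to showing that
\[
 \mathbb P_{t,s}\!\left(i+\int_t^T\lambda(u,\bar S_u)\dd u\in(K-\delta,K)\right)
\]
has a strictly positive increment between $i\in J_1$ and $i\in J_2$. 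Combined with the positive density $\varrho_t(s,\cdot)$ on $\mathcal O$ from Lemma~\ref{lem:strict-density}(i), this makes the symmetrized integral in \eqref{eq:chebyshev} bounded below by $c\,\inf\varrho_t^2/\sup f_t^S>0$ on a set of positive $\dd t\,\dd s$ measure, so $\int_0^T\liminf_k C_k(t)\,\dd t>0$. The resulting gap is also uniform in $m$, which gives \eqref{eq:strict-call} after letting $m\to\infty$ via \eqref{eq:soft-error}.

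The main obstacle is this support/positivity statement for the local-volatility accumulated variance started from interior points: I need that $\int_t^T\lambda(u,\bar S_u)\dd u$ charges every small interval below $K-i$ with a probability that is bounded away from zero locally uniformly in $(t,s)$, and that this bound is stable under the approximations $\lambda_k$. I would first try to get it from continuity and positivity of $\lambda$ below the barrier (Proposition~\ref{prop:surface}(ii),(iii)). A Stroock--Varadhan-type support argument, applied after a time change with the weak convergence of Lemma~\ref{lem:stability}, should let me steer $\bar S$ so that its accumulated variance lands anywhere in $(0,\infty)$; the constraint $K>0$ is what makes the target reachable.

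Once \eqref{eq:strict-call} holds, the remaining claims follow quickly. Equal means \eqref{eq:equal-means} together with $\Law(I_T^{\mathrm H})\neq\Law(I_T^{\mathrm{LV}})$ gives \eqref{eq:strict-cx}. For \eqref{eq:all-strict-convex}, take the martingale coupling $\E[Y\mid X]=X$ from the proof of Theorem~\ref{thm:main}; strict conditional Jensen is strict unless $Y=X$ almost surely, and that would contradict the strict call inequality.
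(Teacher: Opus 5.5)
\textbf{There is a genuine gap.} Your overall plan matches the paper: the majorant Duhamel identity, Chebyshev symmetrization, the density bounds of Lemma~\ref{lem:strict-density}, and a gap that is uniform in $k$ and in large $m$. But the step that carries all of the strictness, a uniform lower bound on the increments of $H_k$ in $i$, is not proved, and the route you sketch for it would not work as stated.
\begin{itemize}
\item The Feynman--Kac formula \eqref{eq:h-FK} does not express $\partial_iH_k$ through $\Phi''$ evaluated along $S^k$. It is an expectation along the auxiliary diffusion \eqref{eq:aux}, whose drift contains $\partial_x\nu_k$. It also carries the weight $\exp\int\varpi_k$, where $\varpi_k=\tfrac12(\partial_{xx}\nu_k-\partial_x\nu_k)$.
\item The limit $\mu$ is only known to be continuous, so these derivatives of the majorants are controlled only by bounds that blow up in $k$. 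That representation therefore gives no uniform lower bound, and it does not converge to ``the density of $i+\int_t^T\lambda(u,\bar S_u)\dd u$ at $K$''.
\item Your ``more robust'' alternative, a limiting call value function $V(t,s,i)$ started from interior points $(t,s)$, is not available either. Theorem~\ref{thm:timezero} gives uniqueness in law only from $(0,S_0)$. Moreover, $\tfrac12s^2V_{ss}+V_i$ has no meaning for a kinked payoff with a merely continuous coefficient.
\item With both $i$-windows below $K$, the increment depends on the future accumulated variance landing in a window bounded away from zero. That is what pushes you to a support theorem, and you would need it uniformly in $k$ and in the starting point. You leave this open.
\end{itemize}

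\textbf{The missing idea.} Avoid pointwise-in-time control of $H_k$ entirely. Integrate the backward equation $\partial_tV_{k,m}=-\lambda_kH_{k,m}$ over $[t_0,T]$ at fixed $s$, and difference in $i$:
\[
 \int_{t_0}^T\lambda_k(r,s)\bigl(H_{k,m}(r,s,i)-H_{k,m}(r,s,j)\bigr)\dd r
 =\E\bigl[\Delta^m_{j,i}(J^k_{t_0,s})\bigr]-\Delta^m_{j,i}(0).
\]
The right-hand side involves only the law of the accumulated variance $J^k_{t_0,s}$ of the approximating diffusion, and no derivatives of $\lambda_k$. Now let the windows straddle the strike: $j\in[i_1,i_2]$ below $K$ and $i\in[i_3,i_4]$ above $K$. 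For the call payoff the right-hand side is then $\E[J^k_{t_0,s}\wedge(K-j)]$, so it suffices that $\mathbb P(J^k_{t_0,s}\geq b_*)\geq\tfrac12$.

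That bound follows from a Doob exit estimate on a compact interior cylinder where $a_*\leq\lambda\leq\lambda_k\leq M_*$, uniformly in $k$ and $s$. The upper bound $M_*$ converts it into a lower bound for the integral of $G_{k,m}$ over $[t_0,T]$ and the box. This is exactly what the time-integrated Chebyshev term $\beta\int_{t_0}^TC_{k,m}(t)\,\dd t$ needs.

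\textbf{The remaining steps are fine but conditional.} Deducing \eqref{eq:strict-cx} from \eqref{eq:strict-call} is correct. Your martingale-coupling proof of \eqref{eq:all-strict-convex} also works: strict conditional Jensen, with a truncation to handle integrability of $\Psi'_+(X)(Y-X)$ using $X>0$ almost surely. It is a legitimate alternative to the paper's decomposition of $\Psi$ through its second-derivative measure on $[a,b]$. But both rest on \eqref{eq:strict-call}, which your proposal does not establish.
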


\begin{proof}
Fix $K>0$ and $t_0\in(0,T)$.  Since $A(t_0,s)\to\infty$ as
$s\downarrow0$, choose
\begin{equation*}
 0<s_1<s_2<s^*(t_0),\qquad
 0<i_1<i_2<K<i_3<i_4<A(t_0,s_2)/\beta.
\end{equation*}
Because $A(t,s)$ and $s^*(t)$ increase in $t$, this box stays in
$\mathcal O$ for $t\in[t_0,T]$.
Use the same coefficients $\lambda_k$ as in \eqref{eq:majorants}, and let
$V_{k,m},H_{k,m}$ denote \eqref{eq:Uk-hk} with payoff
$\Phi_m=\Phi_{m,K}$.  Put
\[
 C_{k,m}(t):=\E[(I_t^{\mathrm H}-\iota(t,S_t))
                    H_{k,m}(t,S_t,I_t^{\mathrm H})]\geq0,
\]
where $\iota$ is defined by \eqref{eq:iota-definition} and the nonnegativity follows from
\eqref{eq:chebyshev}.  Rearranging
\eqref{eq:majorant-duhamel} gives
\begin{equation}
 V_{k,m}(0,S_0,0)-\E[\Phi_m(I_T^{\mathrm H})]
       \geq\beta\int_{t_0}^T C_{k,m}(t)\,\dd t.
 \label{eq:strict-majorant-gap}
\end{equation}
We now obtain a strictly positive lower bound independent of $k$ and of
all sufficiently large $m$.

Choose $d>0$ such that
$\mathcal J=[s_1-d,s_2+d]\subset(0,s^*(t_0))$.
Continuity and positivity of $\lambda$ on the compact cylinder
$[t_0,T]\times\mathcal J$, together with \eqref{eq:majorants}, give constants
$a_*,M_*>0$ such that
\[
 a_*\leq\lambda(t,s)\leq\lambda_k(t,s)\leq M_*,
 \qquad (t,s)\in[t_0,T]\times\mathcal J,
\]
uniformly in $k$.  Start $S^k$ at $s\in[s_1,s_2]$ at time $t_0$,
and let $\tau$ be its first exit from the interior of $\mathcal J$.
The stopped process has zero drift and squared diffusion coefficient
at most $(s_2+d)^2M_*$.  Doob's inequality therefore gives
\[
 \E_{t_0,s}^k\!\left[
 \sup_{0\leq u\leq\delta_*}|S^k_{(t_0+u)\wedge\tau}-s|^2\right]
 \leq4(s_2+d)^2M_*\delta_*.
\]
Choose $0<\delta_*\leq T-t_0$ with
$4(s_2+d)^2M_*\delta_*/d^2\leq1/2$.
An exit requires a displacement of at least $d$, so
\[
 \sup_{k,\,s\in[s_1,s_2]}\mathbb P_{t_0,s}^k(\tau\leq t_0+\delta_*)
       \leq\frac{4(s_2+d)^2M_*\delta_*}{d^2}\leq\frac12.
\]
Let $J^k_{t_0,s}:=\int_{t_0}^T\lambda_k(u,S_u^k)\dd u$ for this start.
On the no-exit event, $J^k_{t_0,s}\geq b_*:=a_*\delta_*$.  Therefore
\begin{equation}
 \inf_{k,\,s\in[s_1,s_2]}
   \mathbb P(J^k_{t_0,s}\geq b_*)\geq\frac12.
 \label{eq:short-time-variance}
\end{equation}

We use \eqref{eq:short-time-variance} to prove a uniform positive lower
bound for $R_{k,m}$ in
\eqref{eq:strict-majorant-increment}, for all $k$ and sufficiently large $m$.
For $s\in[s_1,s_2]$, $j\in[i_1,i_2]$ and $i\in[i_3,i_4]$, set
\[
 G_{k,m}(r,s;j,i):=H_{k,m}(r,s,i)-H_{k,m}(r,s,j)\geq0,
 \qquad
 \Delta^m_{j,i}(z):=\Phi_m(i+z)-\Phi_m(j+z).
\]
Integrating $\partial_tV_{k,m}=-\lambda_k H_{k,m}$ at fixed $s$
and subtracting the resulting identities at $i$ and $j$ gives
\begin{equation}
 \begin{aligned}
 R_{k,m}(s,j,i)
 &:=\int_{t_0}^T\lambda_k(r,s)G_{k,m}(r,s;j,i)\,\dd r\\
 &=\E[\Delta^m_{j,i}(J^k_{t_0,s})]-\Delta^m_{j,i}(0).
 \end{aligned}
 \label{eq:strict-majorant-increment}
\end{equation}
For the payoff $\Phi_K(z)=(z-K)^+$, the corresponding function satisfies
\[
 \Delta^K_{j,i}(z)-\Delta^K_{j,i}(0)=z\wedge(K-j),\qquad z\geq0,
\]
because $j<K<i$.  By \eqref{eq:soft-error} and
\eqref{eq:short-time-variance},
\[
 R_{k,m}(s,j,i)\geq
 \tfrac12\min\{b_*,K-i_2\}-\frac{2\log2}{m}.
\]
Put $c_*:=\tfrac14\min\{b_*,K-i_2\}>0$ and choose $m_0$ so that
$2\log2/m_0\leq c_*$.  For all $k$ and $m\geq m_0$, we have
$R_{k,m}\geq c_*$ on the box.
Since $\lambda_k\leq M_*$ on this box, writing
$|\mathcal B|:=(s_2-s_1)(i_2-i_1)(i_4-i_3)$,
\eqref{eq:strict-majorant-increment} yields
\begin{equation}
 \int_{t_0}^T\int_{s_1}^{s_2}\int_{i_1}^{i_2}\int_{i_3}^{i_4}
 G_{k,m}(r,s;j,i)\,\dd i\,\dd j\,\dd s\,\dd r
 \geq\frac{|\mathcal B|c_*}{M_*}.
 \label{eq:strict-unweighted}
\end{equation}

Finally, we use the joint density to bound the conditional covariance
from below.  Lemma~\ref{lem:strict-density} gives constants $\rho_*,F_*>0$ with
\[
 \begin{gathered}
 \varrho_t(s,z)\geq\rho_*,\qquad
 (t,s,z)\in[t_0,T]\times[s_1,s_2]
                  \times([i_1,i_2]\cup[i_3,i_4]),\\
 0<f_t^S(s)\leq F_*,\qquad (t,s)\in[t_0,T]\times[s_1,s_2].
 \end{gathered}
\]
Substituting \eqref{eq:conditional-density} into \eqref{eq:chebyshev}
and integrating against the spot density $f_t^S(s)$ expresses
$2C_{k,m}(t)$ as an integral over $(s,i,j)$.  Its integrand is nonnegative:
the density factors are nonnegative, and monotonicity of $H_{k,m}$ in its
last argument gives
$(i-j)\bigl[H_{k,m}(t,s,i)-H_{k,m}(t,s,j)\bigr]\geq0$.
Restricting the integral to
$s\in[s_1,s_2]$, $j\in[i_1,i_2]$ and $i\in[i_3,i_4]$ gives a lower
bound.  On this box, $i-j\geq i_3-i_2$ and we have
\[
 f_t^S(s)\,
 \frac{\varrho_t(s,i)}{f_t^S(s)}
 \frac{\varrho_t(s,j)}{f_t^S(s)}
 =\frac{\varrho_t(s,i)\varrho_t(s,j)}{f_t^S(s)}
 \geq\frac{\rho_*^2}{F_*}.
\]
Consequently, for every $t\in[t_0,T]$,
\[
 2C_{k,m}(t)\geq
 \frac{(i_3-i_2)\rho_*^2}{F_*}
 \int_{s_1}^{s_2}\int_{i_1}^{i_2}\int_{i_3}^{i_4}
        G_{k,m}(t,s;j,i)\,\dd i\,\dd j\,\dd s.
\]
Integrating and applying \eqref{eq:strict-unweighted}, followed by
\eqref{eq:strict-majorant-gap}, proves
\begin{equation*}
 V_{k,m}(0,S_0,0)-\E[\Phi_m(I_T^{\mathrm H})]
 \geq D_K:=
 \frac{\beta(i_3-i_2)\rho_*^2|\mathcal B|c_*}
      {2F_*M_*}>0,
\end{equation*}
for every $k$ and $m\geq m_0$.
For fixed $m$, \eqref{eq:majorant-value-limit} identifies the limit as
$k\to\infty$.
Letting $m\to\infty$ next, using \eqref{eq:soft-error}, gives
\[
 \E[(I_T^{\mathrm{LV}}-K)^+]-\E[(I_T^{\mathrm H}-K)^+]\geq D_K>0.
\]
For $K\leq0$, the two call values equal $\E[I_T^{\mathrm H}]-K$ by
nonnegativity and \eqref{eq:equal-means}.  Thus the positive-strike
restriction is sharp, and \eqref{eq:strict-cx} follows from
Theorem~\ref{thm:main}.

To prove \eqref{eq:all-strict-convex} directly, set
$X:=I_T^{\mathrm H}$ and $Y:=I_T^{\mathrm{LV}}$, and let
$\mu_\Psi$ be the distributional second derivative of $\Psi$ on
$(0,\infty)$.  This is a nonnegative, locally finite measure, and strict
convexity allows us to choose $0<a<b<\infty$ with
$0<\mu_\Psi([a,b])<\infty$.  Put
\[
 \varphi(x):=\int_{[a,b]}(x-K)^+\,\mu_\Psi(\dd K),
 \qquad x\geq0.
\]
The function $\varphi$ has at most linear growth, and
$\Psi-\varphi$ is convex on $\R_+$: its distributional second derivative
on $(0,\infty)$ is the restriction of $\mu_\Psi$ to the complement of
$[a,b]$, and $\varphi$ vanishes near zero.  All relevant expectations
are finite, so Theorem~\ref{thm:main}, applied to $\Psi-\varphi$, gives
\begin{align*}
 \E[\Psi(Y)]-\E[\Psi(X)]
 &\geq \E[\varphi(Y)]-\E[\varphi(X)]\\
 &=\int_{[a,b]}\bigl(\E[(Y-K)^+]-\E[(X-K)^+]\bigr)
       \,\mu_\Psi(\dd K)>0.
\end{align*}
Here the equality follows from Tonelli's theorem, and the strict
inequality follows from \eqref{eq:strict-call} and
$\mu_\Psi([a,b])>0$.
This proves \eqref{eq:all-strict-convex}.
\end{proof}

\begin{corollary}
For every $T>0$ and $p\geq1$,
\[
                    \E[(I_T^{\mathrm{LV}})^p]<\infty.
\]
In particular,
\begin{equation}
 \operatorname{Var}(I_T^{\mathrm H})
 <\operatorname{Var}(I_T^{\mathrm{LV}}),                  \label{eq:variance-consequence}
\end{equation}
and the fair strikes of the corresponding continuously monitored volatility
swaps satisfy
\begin{equation}
 \E\!\left[\sqrt{I_T^{\mathrm H}/T}\right]
 >\E\!\left[\sqrt{I_T^{\mathrm{LV}}/T}\right].           \label{eq:volswap-consequence}
\end{equation}
\end{corollary}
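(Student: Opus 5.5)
The plan is to obtain all three claims from the convex-order results already proved (Theorem~\ref{thm:main} and Theorem~\ref{thm:strict}) together with the moment bounds \eqref{eq:cir-moments}.

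\emph{Moments of $I_T^{\mathrm{LV}}$.} Finite moments do not follow from convex order in the direction we have, since $I_T^{\mathrm H}\preceq_{\mathrm{cx}}I_T^{\mathrm{LV}}$ bounds moments of $I_T^{\mathrm H}$ by those of $I_T^{\mathrm{LV}}$, not conversely. Instead I will use the mimicking property and Jensen's inequality in time. For $p\geq1$,
\[
\Bigl(\int_0^T\lambda(u,\bar S_u)\,\dd u\Bigr)^p\leq T^{p-1}\int_0^T\lambda(u,\bar S_u)^p\,\dd u .
\]
Taking expectations and using Tonelli and $\Law(\bar S_u)=\Law(S_u)$ gives $\E[(I_T^{\mathrm{LV}})^p]\leq T^{p-1}\int_0^T\E[\lambda(u,S_u)^p]\,\dd u$. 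By Proposition~\ref{prop:surface}(i) and conditional Jensen, $\lambda(u,S_u)^p=\E[v_u\mid S_u]^p\leq\E[v_u^p\mid S_u]$ almost surely. Hence the bound is at most $T^{p}\,\E[\sup_{u\leq T}v_u^p]<\infty$ by \eqref{eq:cir-moments}. A version issue at $u=0$ is harmless because it concerns a null set of times. Alternatively, the growth bound in Proposition~\ref{prop:surface}(iv) together with the pathwise identity would also work, but the Jensen route is cleaner.

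\emph{Variance.} Take $\Psi(x)=x^2$, which is strictly convex on $\R_+$. Both $\Psi(I_T^{\mathrm H})$ and $\Psi(I_T^{\mathrm{LV}})$ are integrable by \eqref{eq:cir-moments} and the first step. Then \eqref{eq:all-strict-convex} gives $\E[(I_T^{\mathrm H})^2]<\E[(I_T^{\mathrm{LV}})^2]$. Subtracting the equal squared means from \eqref{eq:equal-means} yields \eqref{eq:variance-consequence}.

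\emph{Volatility swap.} Take $\Psi(x)=-\sqrt{x/T}$. This is strictly convex on $\R_+$, including at $0$, and it is integrable since $0\leq\sqrt{x/T}\leq 1+x/T$. Applying \eqref{eq:all-strict-convex} gives $-\E[\sqrt{I_T^{\mathrm H}/T}]<-\E[\sqrt{I_T^{\mathrm{LV}}/T}]$, which is \eqref{eq:volswap-consequence}. The only point needing care is that Theorem~\ref{thm:strict} is stated for strictly convex $\Psi$ on $\R_+$ with integrability. Its proof uses only the distributional second derivative on $(0,\infty)$, and that derivative is $\tfrac14T^{-1/2}x^{-3/2}\,\dd x$ here, which is positive and locally finite. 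The main (mild) obstacle is therefore the moment step, and it is handled by the Jensen/mimicking argument above.
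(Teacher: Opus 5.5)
Your proposal is correct and follows the paper's proof essentially step for step: H\"older in time, Tonelli, the mimicking property and conditional Jensen reduce $\E[(I_T^{\mathrm{LV}})^p]$ to the CIR moment bound \eqref{eq:cir-moments}. You then apply \eqref{eq:all-strict-convex} to $i\mapsto i^2$, using the equal means \eqref{eq:equal-means}, and to $i\mapsto-\sqrt{i/T}$; your extra checks of integrability and of the second derivative $\tfrac14T^{-1/2}x^{-3/2}$ on $(0,\infty)$ are valid and only make explicit what the paper asserts.
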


\begin{proof}
For $p\geq1$, H\"older's inequality, Tonelli's theorem, the mimicking
property, and conditional Jensen give
\begin{align*}
 \E[(I_T^{\mathrm{LV}})^p]
 &\leq T^{p-1}\int_0^T
       \E[\lambda(t,\bar S_t)^p]\,\dd t\\
 &=T^{p-1}\int_0^T
       \E\!\left[\bigl(\E[v_t\mid S_t]\bigr)^p\right]\dd t\\
 &\leq T^{p-1}\int_0^T\E[v_t^p]\,\dd t<\infty.
\end{align*}
The last inequality follows from the CIR moment bound \eqref{eq:cir-moments}.
Taking $p=2$, applying \eqref{eq:all-strict-convex} to $i\mapsto i^2$, and
using the equal means \eqref{eq:equal-means} proves
\eqref{eq:variance-consequence}.  The function
$i\mapsto-\sqrt{i/T}$ is strictly convex on $\R_+$ and integrable under both
laws, so a second application of \eqref{eq:all-strict-convex} proves
\eqref{eq:volswap-consequence}.
\end{proof}

Thus at $\rho=-1$ the volatility-swap ordering is the reverse of the
practitioner ordering recalled in the introduction.

For comparison, the mimicking property gives
$\Law(\lambda(t,\bar S_t))=\Law(\lambda(t,S_t))$, and conditional Jensen gives
\begin{equation*}
 \lambda(t,S_t)=\E[v_t\mid S_t]\preceq_{\mathrm{cx}}v_t.
\end{equation*}
Thus the instantaneous variance in the local volatility model is smaller in
convex order, whereas its integrated variance is larger.

\begin{remark}
The feedback coefficient $\beta=\kappa+\xi/2$ in
\eqref{eq:centered-relation} is positive at $\rho=-1$: at fixed spot, high
accumulated variance lowers the current variance, by \eqref{eq:pathwise}.
This feedback, rather than perfect correlation by itself, is what drives the
ordering.
\end{remark}

\appendix

\section{The joint law of variance and integrated variance}
\label{app:densities}

In this appendix we prove Proposition~\ref{prop:surface}.
To this end we develop a series of technical results below.

Put
\begin{equation}
 Q_t:=v_t+\beta I_t^{\mathrm H}=A(t,S_t),\qquad
 \alpha:=\frac{2\kappa\theta}{\xi^2}>0.
 \label{eq:gap-process}
\end{equation}
The joint law of $(v_t,I_t^{\mathrm H})$ supplies both the continuous
version of the local variance $\lambda(t,s)=\E[v_t\mid S_t=s]$
in Proposition~\ref{prop:surface} and the joint density of
$(S_t,I_t^{\mathrm H})$ used in the strict comparison of
Section~\ref{sec:comparison}.

\subsection{The conditional law of integrated variance}
This section essentially follows Glasserman and Kim~\cite{GlassermanKim2011}, supplementing their results with explicit bounds related
 to the CIR transition density
that will be needed in the sequel.

For $w\geq0$ and $\alpha$ as in \eqref{eq:gap-process}, write
\[
 \mathcal I_{\alpha-1}(w):=
 \sum_{j\geq0}\frac{(w/2)^{2j}}{j!\,\Gamma(j+\alpha)}.
\]
Thus $I_{\alpha-1}(w)=(w/2)^{\alpha-1}\mathcal I_{\alpha-1}(w)$ for
$w>0$, where $I_{\alpha-1}$ is the modified Bessel function of the first
kind; see Broadie and Kaya~\cite[Eq.~(21), p.~221]{BroadieKaya2006}.
To express the CIR transition density of Cox, Ingersoll and
Ross~\cite[Eq.~(18), pp.~391--392]{CoxIngersollRoss1985}, define, for
$t>0$ and $v\geq0$,
\begin{equation*}
 d_t:=\frac{\xi^2(1-e^{-\kappa t})}{2\kappa},\qquad
 w_t(v):=\frac{2\kappa\sqrt{v_0v}}{\xi^2\sinh(\kappa t/2)}.
\end{equation*}
The CIR transition law, a scaled noncentral chi-square distribution with
$2\alpha$ degrees of freedom, has the density
\begin{equation}
 p_t(v)=d_t^{-\alpha}v^{\alpha-1}
 e^{-(v+v_0e^{-\kappa t})/d_t}\mathcal I_{\alpha-1}(w_t(v)),
 \qquad v>0.
 \label{eq:cir-density}
\end{equation}
The function $p$ is continuous and
strictly positive for $t,v>0$.  For every $0<\tau<T<\infty$ and $V>0$,
there exists a finite constant $C_{\tau,T,V}$ such that
\begin{equation}
 p_t(v)\leq C_{\tau,T,V}v^{\alpha-1},
 \qquad \tau\leq t\leq T,\quad 0<v\leq V.
 \label{eq:cir-boundary-domination}
\end{equation}

The conditional law of the integrated variance $I_t^{\mathrm H}$, given
$v_t=v$, has a gamma-series representation due to
Glasserman and Kim~\cite{GlassermanKim2011}.  We give the relevant details next.  Set
\begin{equation}
 \gamma_n(t):=\frac{\kappa^2t^2+4\pi^2n^2}{2\xi^2t^2},\qquad
 \ell_n(t,v):=\frac{16\pi^2n^2(v_0+v)}
 {\xi^2t(\kappa^2t^2+4\pi^2n^2)},\qquad n\geq1.
 \label{eq:gamma-rates}
\end{equation}
For fixed $t>0$ and $v\geq0$, consider mutually independent variables
\[
 N_n\sim\operatorname{Poisson}(\ell_n(t,v)),\quad
 E_{n,j}\sim\operatorname{Exp}(1),\quad
 G_n\sim\operatorname{Gamma}(\alpha,1),\quad
 G'_{n,j}\sim\operatorname{Gamma}(2,1),
\]
and an independent integer-valued variable $\mathsf B$ with Bessel distribution
\begin{equation}
 \mathbb P(\mathsf B=j)=
 \frac{(w_t(v)/2)^{2j}}
 {j!\,\Gamma(j+\alpha)\mathcal I_{\alpha-1}(w_t(v))},\qquad j\geq0.
 \label{eq:bessel-mixture}
\end{equation}
Here $\operatorname{Poisson}(\ell)$ denotes the Poisson distribution with
mean $\ell$, $\operatorname{Exp}(1)$ the exponential distribution with
rate $1$, and $\operatorname{Gamma}(a,1)$ the gamma
distribution with shape $a$ and rate $1$.  Define
\begin{equation}
 \begin{gathered}
 U_{t,v}:=\Xi_1+\Xi_2+\Xi_3,\qquad
 \Xi_1:=\sum_{n\geq1}\frac1{\gamma_n}\sum_{j=1}^{N_n}E_{n,j},\\
 \Xi_2:=\sum_{n\geq1}\frac{G_n}{\gamma_n},\qquad
 \Xi_3:=\sum_{j=1}^{\mathsf B}\sum_{n\geq1}\frac{G'_{n,j}}{\gamma_n}.
 \end{gathered}
 \label{eq:gamma-expansion}
\end{equation}
The dependence of these variables on $(t,v)$ is suppressed.  All series
are finite almost surely: $\sum_n\gamma_n^{-1}<\infty$,
$\sup_n\ell_n<\infty$, and $\mathsf B$ has finite moments of all orders.
For the Poisson component, Tonelli's theorem gives
\[
 \E[\Xi_1]=\sum_{n\geq1}\frac{\ell_n}{\gamma_n}
 \leq\left(\sup_{n\geq1}\ell_n\right)
       \sum_{n\geq1}\gamma_n^{-1}<\infty.
\]

\begin{lemma}\label{lem:cir-joint}
The laws of $U_{t,v}$, for $v>0$, form a version of the conditional law
of $I_t^{\mathrm H}$ given $v_t=v$.  For $t>0$ and $v\geq0$, the laws of
$U_{t,v}$ have densities $f_t(v,i)$ that are jointly
continuous on $(0,\infty)\times[0,\infty)\times\R$.
They vanish for $i\leq0$ and are strictly positive for $i>0$.
Consequently, $(v_t,I_t^{\mathrm H})$ has joint density
$p_t(v)f_t(v,i)$ on $(0,\infty)^2$.
\end{lemma}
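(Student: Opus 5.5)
The plan is to derive all four assertions of Lemma~\ref{lem:cir-joint} from the Glasserman--Kim representation~\eqref{eq:gamma-expansion}. We take the law identification from~\cite{GlassermanKim2011} and supplement it with an analytic study of the density.

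\emph{Step 1: conditional law.} Glasserman and Kim, using the Pitman--Yor decomposition of the squared Bessel bridge, show that the Laplace transform $\E[e^{-aI_t^{\mathrm H}}\mid v_t=v]$ factors into three pieces. These correspond exactly to $\Xi_1$ (a compound Poisson sum of exponentials), $\Xi_2$ (a series of gamma variables of shape $\alpha$), and $\Xi_3$ (a Bessel-randomized number of copies of a series of $\operatorname{Gamma}(2,1)$ variables). I would verify this factor by factor. The infinite-product form of the Broadie--Kaya/Pitman--Yor transform follows from $\sinh z/z=\prod_n(1+z^2/(\pi^2n^2))$. Matching rates gives $\gamma_n$ and Poisson means $\ell_n$ as in~\eqref{eq:gamma-rates}, and the Bessel mixture~\eqref{eq:bessel-mixture} reproduces the ratio of $\mathcal I_{\alpha-1}$ factors. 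Since both sides are Laplace transforms of nonnegative variables and agree for $a\ge0$, the laws coincide. Measurability of $v\mapsto\Law(U_{t,v})$ is immediate from the explicit parameters, so this gives a version of the conditional law.

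\emph{Step 2: density, continuity and positivity.} The variable $\Xi_2$ alone already has a smooth density. Its characteristic function is $\prod_n(1-\mathrm{i}u/\gamma_n)^{-\alpha}$, and since $\gamma_n\asymp n^2$ its modulus decays faster than any power of $|u|$. Indeed, for $|u|$ large, the roughly $\sqrt{|u|}$ factors with $\gamma_n\le|u|$ each contribute at least a factor $2^{-\alpha/2}$, giving decay like $e^{-c\sqrt{|u|}}$. Hence $U_{t,v}=\Xi_2+(\Xi_1+\Xi_3)$ has density
\[
f_t(v,i)=\frac1{2\pi}\int_{\R}e^{-\mathrm{i}ui}\,\varphi_{\Xi_2}(u)\,\varphi_{\Xi_1+\Xi_3}(u)\,\dd u .
\]
The characteristic function of $\Xi_1+\Xi_3$ is explicit and has modulus at most one. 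It depends continuously on $(t,v)$: $\ell_n$, $\gamma_n$ and $w_t(v)$ are continuous, including at $v=0$, where $\mathsf B=0$. The factor $\varphi_{\Xi_2}$ is dominated uniformly in $t$ on compacts, because the bound $\gamma_n\ge c\,n^2$ is locally uniform in $t$. Dominated convergence then gives joint continuity of $f$ in $(t,v,i)$.

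The density vanishes for $i\le0$ because $U_{t,v}\ge\Xi_2>0$. At $i=0$ continuity holds since the density of $\Xi_2$ is flat there: small-ball estimates for gamma series decay faster than any power. For positivity at every $i>0$, the support of $\Xi_2$ is all of $[0,\infty)$: finite truncations of the gamma series have support $[0,\infty)$ and the tail series is arbitrarily small with positive probability. Writing $f$ as the convolution of the strictly positive continuous density of $\Xi_2$ on $(0,\infty)$ with the law of an independent nonnegative variable gives $f_t(v,i)>0$ for all $i>0$. Positivity of the $\Xi_2$ density itself can be obtained by the same convolution argument, splitting off $G_1/\gamma_1$, whose density is positive on $(0,\infty)$.

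\emph{Step 3: joint density.} Combine Step 1 with the CIR density~\eqref{eq:cir-density}: $\mathbb P(v_t\in\dd v,I_t^{\mathrm H}\in\dd i)=p_t(v)f_t(v,i)\,\dd v\,\dd i$ on $(0,\infty)^2$. The atom-free law of $v_t$ (for $t>0$) makes the boundary $v=0$ negligible.

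I expect the main obstacle to be the uniform Fourier decay in Step 2. It is needed locally uniformly in $t$ near arbitrary positive times and down to the edge $i=0$. I would handle it by the explicit counting bound on the product $\prod_n|1-\mathrm{i}u/\gamma_n|^{-\alpha}$ described above.
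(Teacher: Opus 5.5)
Your continuity argument is sound but takes a different route from the paper. The paper proceeds as follows:
\begin{itemize}
\item it splits off a finite block $Y_t=\sum_{n\le M}G_n/\gamma_n(t)$ with $M\alpha>1$;
\item it computes the density of $Y_t$ through a Dirichlet integral, with the explicit two-sided bounds \eqref{eq:gamma-block-bounds};
\item it gets weak continuity of the remainder $\mathcal R_{t,v}$ from its Laplace transform;
\item it defines $f_t(v,i)=\E[g_t^{(M)}(i-\mathcal R_{t,v})]$.
\end{itemize}
You instead use the $e^{-ct\sqrt{|u|}}$ decay of $\varphi_{\Xi_2}$ and Fourier inversion. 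Your route gives, in one stroke, continuity in $i$ on all of $\R$ and a bound on $f$ that is uniform for $t$ in compacts. This makes your remark about flatness at $i=0$ unnecessary: a continuous density that vanishes on $(-\infty,0)$ also vanishes at $0$. The paper's route instead yields explicit pointwise upper and lower bounds on the block density. It reuses these in Lemma~\ref{lem:U0} for the small-time estimates \eqref{eq:weighted-bridge-density} and \eqref{eq:bridge-density-lower}.

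The positivity step has a gap. Set $W:=\Xi_1+\Xi_3$, so that $f_t(v,i)=\E[f_{\Xi_2}(i-W)]$ with $f_{\Xi_2}>0$ on $(0,\infty)$. This gives $f_t(v,i)>0$ only when $\mathbb P(W<i)>0$. Nonnegativity of $W$ is not enough: if $W\ge1$ almost surely, the convolution vanishes on $(0,1)$.

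So you must show $\mathbb P(\Xi_1+\Xi_3<\varepsilon)>0$ for every $\varepsilon>0$, and this is not automatic. Since $\ell_n(t,v)\to4(v_0+v)/(\xi^2t)>0$, we have $\sum_n\ell_n=\infty$. Hence infinitely many Poisson counts are positive and $\Xi_1>0$ almost surely.

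The repair is the truncation argument the paper applies to $\mathcal R_{t,v}$:
\begin{itemize}
\item The event $\{\mathsf B=0\}$ has positive probability, and on it $\Xi_3=0$.
\item The first $N$ Poisson counts all vanish with probability $e^{-\sum_{n\le N}\ell_n}>0$, independently of the tail.
\item The tail has expectation $\sum_{n>N}\ell_n/\gamma_n\to0$, so by Markov's inequality it is smaller than $\varepsilon$ with positive probability for large $N$.
\end{itemize}

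There is also a smaller point. When $\alpha<1$, the density of $G_1/\gamma_1$ is unbounded at $0$. Your splitting therefore identifies the continuous density of $\Xi_2$ with the convolution only almost everywhere. Pointwise positivity then needs an extra step: either lower semicontinuity of the convolution via Fatou, or splitting off a block with $M\alpha>1$, as the paper does.
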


\begin{proof}
The conditional-law assertion follows from Glasserman and
Kim~\cite[Theorem~2.2]{GlassermanKim2011}, whose expansion holds for every
$\delta>0$, with their $\sigma=\xi$ and $\delta=2\alpha$.
For $b\geq0$, set
\[
 \omega_b:=\sqrt{\kappa^2+2\xi^2b},\qquad
 c_t(b):=\frac{\omega_b\sinh(\kappa t/2)}
                  {\kappa\sinh(\omega_b t/2)}.
\]
Their Laplace-transform formulas~\cite[Lemma~2.4]{GlassermanKim2011},
combined with \eqref{eq:bessel-mixture}, give
\begin{equation}
 \begin{aligned}
 \mathcal L_{t,v}(b)&:=\E[e^{-bU_{t,v}}]
 =c_t(b)^\alpha
   \frac{\mathcal I_{\alpha-1}(w_t(v)c_t(b))}
        {\mathcal I_{\alpha-1}(w_t(v))}\\
 &\quad{}\times\exp\!\left\{
   \frac{v_0+v}{\xi^2}
   \left(\kappa\coth\frac{\kappa t}{2}
         -\omega_b\coth\frac{\omega_b t}{2}\right)\right\}.
 \end{aligned}
 \label{eq:bridge-laplace-transform}
\end{equation}
The formula also holds at $v=0$: then $\mathsf B=0$ almost surely and
the normalized Bessel ratio is one.  For every $b\geq0$, the right-hand
side is continuous in $(t,v)\in(0,\infty)\times[0,\infty)$, since
$\mathcal I_{\alpha-1}$ is continuous and strictly positive on $[0,\infty)$.
The continuity theorem for Laplace transforms therefore gives weak
continuity of the laws of $U_{t,v}$, including at $v=0$.

Fix an integer $M\geq1$ with $r:=M\alpha>1$, and write
\begin{equation}
 U_{t,v}=Y_t+\mathcal R_{t,v},\qquad
 Y_t:=\sum_{n=1}^{M}G_n/\gamma_n(t).
 \label{eq:gamma-block}
\end{equation}
By independence and \eqref{eq:bridge-laplace-transform},
\[
 \E[e^{-b\mathcal R_{t,v}}]
 =\mathcal L_{t,v}(b)
   \prod_{n=1}^{M}\left(1+\frac{b}{\gamma_n(t)}\right)^\alpha,
 \qquad b\geq0.
\]
This is also continuous in $(t,v)$, so the same theorem shows that
$(t,v)\mapsto\operatorname{Law}(\mathcal R_{t,v})$ is weakly continuous.
Let $g_t^{(M)}$ denote the density of $Y_t$, extended by zero for $y\leq0$.
Next we establish joint continuity of $g_t^{(M)}$ in $(t,y)\in(0,\infty)^2$.
For $M=1$, it is a scaled gamma density and the stated continuity follows.
For $M\geq2$, $g_t^{(M)}$ is the convolution of the densities of
$G_n/\gamma_n(t)$, $1\leq n\leq M$.  The substitution $y_n=yu_n$
in the convolution integral gives, for $y>0$,
\[
 g_t^{(M)}(y)
 =\frac{y^{r-1}\prod_{n=1}^{M}\gamma_n(t)^\alpha}{\Gamma(\alpha)^M}
   \int_{\Delta_M}
      e^{-y\sum_{n=1}^{M}\gamma_n(t)u_n}
      \prod_{n=1}^{M}u_n^{\alpha-1}\,
      \dd u_1\cdots\dd u_{M-1},
\]
where $u_M:=1-\sum_{n=1}^{M-1}u_n$ and
$\Delta_M:=\{(u_1,\ldots,u_{M-1})\in(0,\infty)^{M-1}:
\sum_{n=1}^{M-1}u_n<1\}$.
The integrand in this simplex integral is continuous in $(t,y)$ and
locally dominated by a constant multiple of the integrable function
$\prod_{n=1}^{M}u_n^{\alpha-1}$.  Dominated convergence therefore gives
joint continuity for $t,y>0$.

On the simplex, $\gamma_1(t)\leq\sum_{n=1}^{M}\gamma_n(t)u_n
\leq\gamma_M(t)$, while the Dirichlet integral is
\[
 \int_{\Delta_M}\prod_{n=1}^{M}u_n^{\alpha-1}
     \,\dd u_1\cdots\dd u_{M-1}
   =\frac{\Gamma(\alpha)^M}{\Gamma(M\alpha)}.
\]
Together, these observations yield the following bounds:
\begin{equation}
 \frac{\prod_{n=1}^{M}\gamma_n(t)^\alpha}{\Gamma(r)}
 y^{r-1}e^{-\gamma_M(t)y}
 \leq g_t^{(M)}(y)\leq
 \frac{\prod_{n=1}^{M}\gamma_n(t)^\alpha}{\Gamma(r)}
 y^{r-1}e^{-\gamma_1(t)y},\quad y>0.
 \label{eq:gamma-block-bounds}
\end{equation}
Since $r>1$, the upper bound in \eqref{eq:gamma-block-bounds}, together
with joint continuity for $t,y>0$, shows that the zero extension of
$g_t^{(M)}$ is bounded and continuous and vanishes at infinity.
For every $0<a<b<\infty$, the bounds in
\eqref{eq:gamma-block-bounds} imply that $g_t^{(M)}(y)\to0$ as
$y\downarrow0$ and as $y\to\infty$, uniformly for $t\in[a,b]$.
Together with joint continuity, this gives, for every $t_0>0$,
\begin{equation}
 \lim_{t\to t_0}\sup_{y\in\R}
 \left|g_t^{(M)}(y)-g_{t_0}^{(M)}(y)\right|=0.
 \label{eq:gamma-block-uniform-continuity}
\end{equation}
Since $Y_t$ and $\mathcal R_{t,v}$ are independent and $Y_t$ has density
$g_t^{(M)}$, their sum $U_{t,v}$ has density
\begin{equation}
 f_t(v,i):=\E[g_t^{(M)}(i-\mathcal R_{t,v})].
 \label{eq:bridge-convolution}
\end{equation}
Joint continuity of this density follows from weak continuity of
$(t,v)\mapsto\operatorname{Law}(\mathcal R_{t,v})$, the uniform convergence in
\eqref{eq:gamma-block-uniform-continuity}, and uniform continuity of each
zero-extended $g_t^{(M)}$.
It vanishes for $i\leq0$ because $\mathcal R_{t,v}\geq0$ and
$g_t^{(M)}$ vanishes on $(-\infty,0]$.
This also proves that $f$ is uniformly bounded for $t\in[a,b]$, for every
$0<a<b<\infty$; the bound may depend on $a,b$, but not on $v$ or $i$.

For the positivity assertion, write
$\mathcal R_{t,v}=\Xi_1+\Xi_3+\sum_{n>M}G_n/\gamma_n$, and
for $N\geq M$, let $\mathcal R^{(N)}_{t,v}$ be the sum of its terms
with index $n\leq N$.
For each compact set $\mathcal K\subset(0,\infty)\times[0,\infty)$, there is a
constant $C_{\mathcal K}<\infty$ such that
\begin{equation}
 \sup_{(t,v)\in\mathcal K}
 \E\!\left[\left|\mathcal R_{t,v}-\mathcal R^{(N)}_{t,v}\right|\right]
 \leq C_{\mathcal K}\sum_{n>N}n^{-2}\longrightarrow0.
 \label{eq:remainder-truncation-tail}
\end{equation}
Indeed, the expected contribution of index $n>N$ is
$(\ell_n+\alpha+2\E[\mathsf B])/\gamma_n$;
$\ell_n$ and $\E[\mathsf B]$ are uniformly bounded on $\mathcal K$, and
$\gamma_n^{-1}\leq C_{\mathcal K}n^{-2}$ there.
For each fixed $(t,v)$ and every $\varepsilon>0$, we have
$\mathbb P(\mathcal R_{t,v}<\varepsilon)>0$.
Indeed, conditional on $\mathsf B=0$, which has positive probability,
the finite sum $\mathcal R^{(N)}_{t,v}$ and its tail are independent.
The finite sum is smaller than $\varepsilon/2$ with positive conditional
probability, since the Poisson counts can vanish and the gamma variables
can be arbitrarily small.
Estimate~\eqref{eq:remainder-truncation-tail} implies that the conditional
expected tail tends to zero.  By Markov's inequality, the tail is therefore for sufficiently large $N$
  smaller  than $\varepsilon/2$ with positive conditional probability.  By independence conditional on $\mathsf B=0$, the finite sum and tail are both below $\varepsilon/2$ with positive conditional probability. Since $\mathbb P(\mathsf B=0)>0$, their sum is below $\varepsilon$ with positive probability.
As $g_t^{(M)}(y)>0$ for $y>0$, \eqref{eq:bridge-convolution} now gives
$f_t(v,i)>0$ for every $i>0$.  Since $f_t(v,\cdot)$ is a conditional density of $I_t^{\mathrm H}$ given $v_t=v$,
multiplying by the marginal density $p_t(v)$ in \eqref{eq:cir-density}
gives the joint density $p_t(v)f_t(v,i)$.
\end{proof}

\subsection{Gap densities and continuity at time zero}

\begin{proposition}
\label{prop:gap-densities}
With $Q_t$, $p_t$ and $f_t$ defined by \eqref{eq:gap-process},
\eqref{eq:cir-density} and \eqref{eq:bridge-convolution}, respectively,
define for $t,q>0$
\begin{equation}
 \begin{aligned}
 g_t(q)&:=\frac1\beta\int_0^q
     p_t(v)f_t\!\left(v,\frac{q-v}{\beta}\right)\dd v,\\
 m_t(q)&:=\frac1\beta\int_0^q
     v\,p_t(v)f_t\!\left(v,\frac{q-v}{\beta}\right)\dd v.
 \end{aligned}
 \label{eq:gap-densities}
\end{equation}
These are jointly continuous, strictly positive densities of
$\Law(Q_t)$ and $\E[v_t;Q_t\in\dd q]$, respectively.
The ratio
\begin{equation}
 \phi_t(q):=\frac{m_t(q)}{g_t(q)},\qquad t,q>0,
 \label{eq:phi-definition}
\end{equation}
is a continuous version of $\E[v_t\mid Q_t=q]$, with
$0<\phi_t(q)\leq q$.
\end{proposition}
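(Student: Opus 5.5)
The plan is to obtain everything from the joint density of $(v_t,I_t^{\mathrm H})$ in Lemma~\ref{lem:cir-joint} by a linear change of variables. For fixed $t>0$, consider the map $(v,i)\mapsto(v,q)$ with $q=v+\beta i$. Its inverse is $i=(q-v)/\beta$, and its Jacobian is $1/\beta$. The pair $(v_t,Q_t)$ therefore has density
\[
\beta^{-1}p_t(v)f_t\bigl(v,(q-v)/\beta\bigr)
\]
on $\{0<v<q\}$. This uses that $f_t(v,\cdot)$ vanishes for nonpositive arguments, so the $v$-integral is effectively over $(0,q)$.

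Integrating this density in $v$ gives $g_t$ as the density of $Q_t$. Integrating $v$ times the density gives $m_t$, which by Tonelli is the density of the measure $\E[v_t;Q_t\in\dd q]$. It follows that $\phi_t=m_t/g_t$ is a version of $\E[v_t\mid Q_t=q]$ wherever $g_t>0$. The bound $0<\phi_t(q)\le q$ holds because the integrand is supported on $0<v<q$ and is positive there. Positivity of $g_t(q)$ for $q>0$ follows from strict positivity of $p_t(v)$ for $v>0$ (from \eqref{eq:cir-density}) and of $f_t(v,i)$ for $i>0$ (Lemma~\ref{lem:cir-joint}), since on $(0,q)$ both arguments are positive. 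The same reasoning gives positivity of $m_t$, and continuity of $\phi_t$ then follows from continuity of $m_t$ and $g_t$.

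The main step is joint continuity of $g$ and $m$ in $(t,q)\in(0,\infty)^2$. I will substitute $v=qu$ with $u\in(0,1)$:
\[
g_t(q)=\frac{q}{\beta}\int_0^1 p_t(qu)\,f_t\bigl(qu,q(1-u)/\beta\bigr)\,\dd u .
\]
The integrand is jointly continuous in $(t,q)$ for each fixed $u\in(0,1)$. This uses continuity of $p$ for $t,v>0$ and joint continuity of $f$ on $(0,\infty)\times[0,\infty)\times\R$ from Lemma~\ref{lem:cir-joint}.

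For domination, fix a compact box $(t,q)\in[\tau,T]\times[q_1,q_2]$ with $0<\tau$ and $0<q_1$. On this box $f$ is uniformly bounded, by the boundedness established in the proof of Lemma~\ref{lem:cir-joint} for $t$ in a compact subinterval. For $p$, the boundary estimate \eqref{eq:cir-boundary-domination} with $V=q_2$ gives
\[
p_t(qu)\le C(qu)^{\alpha-1}\le C' u^{\alpha-1}
\]
when $\alpha<1$. When $\alpha\ge1$, $p$ is simply bounded on the box. Since $u^{\alpha-1}$ is integrable on $(0,1)$ because $\alpha>0$, dominated convergence gives joint continuity of $g$. The same argument applies to $m$, since the extra factor $qu$ is bounded.

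I expect the only delicate point to be this domination near $v=0$ when the Feller condition fails ($\alpha<1$), where $p_t$ blows up. That is exactly where \eqref{eq:cir-boundary-domination} and the $v$-uniform bound on $f$ are needed.
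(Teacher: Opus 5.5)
Your proposal is correct and follows essentially the paper's argument. Both use the linear change of variables applied to the joint density of Lemma~\ref{lem:cir-joint}, get positivity and the bound $\phi_t(q)\leq q$ from the integrand being positive and supported on $0<v<q$, and get joint continuity by dominated convergence from \eqref{eq:cir-boundary-domination} and the $v$-uniform bound on $f$. The only cosmetic difference is that you rescale $v=qu$ to fix the integration domain, whereas the paper extends the integrand by zero for $v\geq q$ and integrates over a fixed interval.
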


\begin{proof}
Apply the change of variables $(v,i)\mapsto(q,i)=(v+\beta i,i)$ to the
joint density in Lemma~\ref{lem:cir-joint}.  This gives the asserted densities and
the conditional identity.  On a compact set of $(t,q)$ with $t>0$,
\eqref{eq:cir-boundary-domination} and the boundedness of $f$ imply that
the integrands in \eqref{eq:gap-densities} are bounded by a constant times
$v^{\alpha-1}$ on a fixed bounded
interval.  We extend each integrand by zero for $v\geq q$ and integrate
over this fixed interval.  As $(t_n,q_n)\to(t,q)$, continuity of $p$ and
of the zero-extended density $f$ implies that the extended integrands
converge for every $v>0$.  Dominated convergence
therefore proves joint continuity.  Positivity follows
because the integrands are strictly positive for $0<v<q$.
Finally, $0<m_t(q)\leq qg_t(q)$, which gives the bound at every $q>0$.
\end{proof}

\begin{lemma}\label{lem:U0}
The conditional expectation of $I_t^{\mathrm H}$ given $Q_t=q$ admits
the continuous version
\begin{equation}
 e_t(q):=\frac{q-\phi_t(q)}{\beta},\qquad t,q>0.
 \label{eq:conditional-integral}
\end{equation}
For every compact interval $\mathcal K=[q_-,q_+]\subset(0,\infty)$, there exist
$C_{\mathcal K},t_{\mathcal K}>0$ such that
\begin{equation}
 0\leq\sup_{q\in\mathcal K} e_t(q)\leq C_{\mathcal K}t,\qquad 0<t\leq t_{\mathcal K}.
 \label{eq:U0}
\end{equation}
In particular, for every $0<q_-<q_+<\infty$,
\[
 \lim_{t\downarrow0}\sup_{q\in[q_-,q_+]}
       |\phi_t(q)-q|=0.
\]
\end{lemma}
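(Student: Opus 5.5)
The first assertion, that $e_t$ is a continuous version of $\E[I_t^{\mathrm H}\mid Q_t=q]$, I will read off from Proposition~\ref{prop:gap-densities}. Since $Q_t=v_t+\beta I_t^{\mathrm H}$ is $\sigma(Q_t)$-measurable, we have $\beta\,\E[I_t^{\mathrm H}\mid Q_t]=Q_t-\E[v_t\mid Q_t]=Q_t-\phi_t(Q_t)$. Continuity of $e_t$ follows from continuity of $\phi_t$. Nonnegativity follows from $\phi_t(q)\le q$.

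For the upper bound in \eqref{eq:U0}, I avoid the densities and argue directly from the identity $\E[I_t^{\mathrm H};Q_t\in\dd q]=e_t(q)g_t(q)\,\dd q$. I will show that the conditional mass of $I_t^{\mathrm H}$ near a fixed gap level is $O(t)$.

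The numerator $m^I_t(q):=\beta^{-1}\int_0^q \frac{q-v}{\beta}\,p_t(v)f_t(v,\frac{q-v}{\beta})\dd v$ is the density of $\E[I_t^{\mathrm H};Q_t\in\dd q]$. Since $e_t=m^I_t/g_t$, I need an upper bound on $m^I_t$ and a lower bound on $g_t$, uniformly on $\mathcal K$ for small $t$.

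The lower bound on $g_t$ is the step I expect to be the main obstacle, since the CIR density $p_t$ concentrates at $v_0$ as $t\downarrow0$. So $g_t(q)$ is not uniformly positive for $q$ far from $v_0$. Indeed $Q_t\to v_0$, and $g_t$ degenerates on compacts away from $v_0$.

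This suggests that the lemma must be read with $\mathcal K$ arbitrary, but the bound must come from a ratio estimate rather than separate bounds. The first thing I would try is the following. For $q$ in $\mathcal K$ and $v<q$, write $i=(q-v)/\beta$, and bound $i\,f_t(v,i)$ against $f_t$ using the gamma expansion \eqref{eq:gamma-expansion}: the conditional law of $I_t$ given $v_t=v$ has mean of order $t(v_0+v)/2$. I would then seek a pointwise comparison in which the conditional density of $I_t$ given $(v_t=v)$ has its mass on scale $t$, with exponential tails at rate $\gamma_1(t)\asymp t^{-2}$.

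Alternatively, and more robustly, I would use the fact that $e_t(q)$ is a conditional mean of $I_t^{\mathrm H}$ given $v_t+\beta I_t^{\mathrm H}=q$. Conditioning first on $v_t$ and using the explicit Laplace transform \eqref{eq:bridge-laplace-transform}, I would show that $\E[I_t\mid v_t=v,\ I_t=(q-v)/\beta]$-type weights satisfy a likelihood-ratio bound: the density $i\mapsto f_t(v,i)$ decays like $e^{-\gamma_1(t)i}$ beyond $i\gtrsim t$, via \eqref{eq:gamma-block-bounds}. Meanwhile $p_t(v)$ varies in $v$ only on scale $\sqrt t$, as a near-Gaussian. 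Shifting $v$ by $\beta i$ therefore costs at most a factor $e^{O(i/t)}$, dominated by $e^{-\gamma_1 i}$ with $\gamma_1\sim t^{-2}$. This gives $m^I_t(q)\le C t\,g_t(q)$ uniformly on $\mathcal K$.

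Finally, $|\phi_t(q)-q|=\beta e_t(q)\le \beta C_{\mathcal K}t\to0$ uniformly on $[q_-,q_+]$.
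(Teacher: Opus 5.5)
Your treatment of the first assertion is fine. The conditional version follows from $Q_t=v_t+\beta I_t^{\mathrm H}$, continuity follows from that of $\phi_t$, and $e_t\ge0$ follows from $\phi_t(q)\le q$. You also correctly see that $g_t(q)$ is exponentially small in $1/t$ for $q\in\mathcal K$ away from $v_0$.

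\textbf{The gap.} The bound $e_t(q)\le C_{\mathcal K}t$ is never actually established. The likelihood-ratio inequality $m^I_t(q)\le Ct\,g_t(q)$ rests on two unquantified heuristics, and a third ingredient is missing altogether.
\begin{enumerate}
\item The claimed decay of $f_t(v,\cdot)$ does not follow from \eqref{eq:gamma-block-bounds}. That estimate controls only the density $g_t^{(M)}$ of the first $M$ gamma terms. $f_t(v,\cdot)$ is the convolution of $g_t^{(M)}$ with the remainder $\mathcal R_{t,v}$. Its Poisson intensities satisfy $\ell_n\le C/t$ and its Bessel count satisfies $\E\mathsf B\le(1-\alpha)_++w_t(v)/2\le C/t$. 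Hence the available tail bounds carry an extra factor $e^{D/t}$, and on the scale $i\asymp t$ there is no clean $e^{-\gamma_1(t)i}$ decay relative to the bulk.
\item The claim that shifting $v$ by $\beta i$ in $p_t$ costs only $e^{O(i/t)}$ is unproved when $q$ lies in the large-deviation regime away from $v_0$. It also fails as $v=q-\beta i\downarrow0$ when $\alpha<1$, since there $p_t(v)$ behaves like $v^{\alpha-1}$.
\item Above all, a ratio bound needs a lower bound on the integrand where $i=O(t)$, equivalently on $g_t(q)$. Your sketch supplies none, and this is exactly where the difficulty you identified sits.
\end{enumerate}

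\textbf{The missing idea.} Your claim that separate bounds cannot work is what leads you astray. They do work, provided the numerator is an exponential moment at scale $t^{-2}$ rather than a first moment. The paper proves two estimates:
\begin{enumerate}
\item $e^{\eta_0 i/t^2}f_t(v,i)\le Ct^{-2}e^{D/t}$ uniformly for $v\in[0,q_+]$ and $i\ge0$. This comes from exponential moments of \eqref{eq:gamma-expansion} together with the weighted bound on $g_t^{(M)}$.
\item $g_t(q)\ge c\,e^{-D_3/t}$ on $\mathcal K$. This comes from integrating only over $i\in[R_0t,2R_0t]$, where $f_t\ge c\,t^{-r-1}e^{-D_1/t}$, and using $p_t\ge c\,t^{-\alpha}e^{-D_2/t}$ on $[q_-/2,q_+]$.
\end{enumerate}
Dividing gives $\E[e^{\eta_0I_t^{\mathrm H}/t^2}\mid Q_t=q]\le A_{\mathcal K}t^{-2}e^{B_{\mathcal K}/t}$. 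Conditional Jensen then yields $e_t(q)\le(t^2/\eta_0)\log\bigl(A_{\mathcal K}t^{-2}e^{B_{\mathcal K}/t}\bigr)\le C_{\mathcal K}t$. The factor $t^2$ is what turns the $e^{O(1/t)}$ losses into an $O(t)$ error.

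A ratio argument split at $i=\Lambda t$, with $\Lambda$ large enough that $e^{-\eta_0\Lambda/t}$ beats the $e^{O(1/t)}$ losses, could also be closed. However, it needs these same two estimates, and they are the substance of the proof.
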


\begin{proof}
The asserted conditional version follows from \eqref{eq:gap-process}
and Proposition~\ref{prop:gap-densities}.
All constants below depend only on $\kappa,\theta,\xi,v_0,q_-,q_+$ and
the fixed integer $M$ in \eqref{eq:gamma-block}.  Unless a smaller range
is explicitly stated, the estimates below hold for $0<t\leq1$ and
$0\leq v\leq q_+$, with constants independent of $t$ and $v$.
From \eqref{eq:gamma-rates},
\begin{equation}
 \frac{c_-n^2}{t^2}\leq\gamma_n(t)\leq\frac{c_+n^2}{t^2},\qquad
 \ell_n(t,v)\leq\frac Ct,\qquad w_t(v)\leq\frac Ct,
 \label{eq:gamma-small-time}
\end{equation}
where $c_-=2\pi^2/\xi^2$ and
$c_+=(\kappa^2+4\pi^2)/(2\xi^2)$.

From \eqref{eq:bessel-mixture} we obtain
$\E[\mathsf B(\mathsf B+\alpha-1)]=w_t(v)^2/4$,
so that Jensen's inequality gives
\[
 \E\mathsf B\leq
 \frac{1-\alpha+\sqrt{(\alpha-1)^2+w_t(v)^2}}2
 \leq(1-\alpha)_++\frac{w_t(v)}2.
\]
Since $\sum_n\gamma_n^{-1}\leq Ct^2$, taking expectations in
\eqref{eq:gamma-expansion} yields
\begin{equation}
 \sup_{0\leq v\leq q_+}\E U_{t,v}\leq C_1t.
 \label{eq:bridge-first-moment}
\end{equation}
Indeed, the three expectations are, respectively,
$\sum_n\ell_n/\gamma_n$, $\alpha\sum_n\gamma_n^{-1}$ and
$2\E\mathsf B\sum_n\gamma_n^{-1}$.

Fix $\eta_0:=\pi^2/\xi^2=c_-/2$ and set
\[
 a_n(t):=\frac{\eta_0}{t^2\gamma_n(t)}\leq\frac1{2n^2},\qquad
 P_t:=\prod_{n\geq1}(1-a_n(t))^{-1}\leq e^{\pi^2/6},
\]
where in the last inequality we used $-\log(1-a)\leq2a$ for $0\leq a\leq1/2$.
Exponential moments of the independent gamma variables give
\begin{equation}
 \E e^{\eta_0 U_{t,v}/t^2}
 =P_t^\alpha
   \exp\!\left\{\sum_{n\geq1}\ell_n\frac{a_n}{1-a_n}\right\}
   \frac{\mathcal I_{\alpha-1}(P_tw_t(v))}
        {\mathcal I_{\alpha-1}(w_t(v))}
 \leq C e^{D/t}.
 \label{eq:bridge-exp-moment}
\end{equation}
For the last inequality, choose $C_0>0$ such that
$\ell_n(t,v)\leq C_0/t$ and $w_t(v)\leq C_0/t$ throughout the stated
ranges, as permitted by \eqref{eq:gamma-small-time}.
Since $a_n(t)\leq1/(2n^2)\leq1/2$, we have
\[
 \sum_{n\geq1}\ell_n(t,v)\frac{a_n(t)}{1-a_n(t)}
 \leq\frac{2C_0}{t}\sum_{n\geq1}a_n(t)
 \leq\frac{C_0}{t}\sum_{n\geq1}\frac1{n^2}
 =\frac{C_0\pi^2}{6t}.
\]
With $d_\alpha:=\min\{1,\alpha\}>0$ and
$(\alpha)_j:=\Gamma(\alpha+j)/\Gamma(\alpha)$, the bound
$(\alpha)_j\geq d_\alpha^j j!$ gives
\[
 \frac1{\Gamma(\alpha)}\leq\mathcal I_{\alpha-1}(z)
 \leq\frac{e^{z/\sqrt{d_\alpha}}}{\Gamma(\alpha)},\qquad z\geq0.
\]
Here the last step follows from
$\sum_{j\geq0}(u/2)^{2j}/(j!)^2\leq e^u$, using
$\binom{2j}{j}\leq4^j$.
Set $P_*:=e^{\pi^2/6}$, so that $P_t\leq P_*$.
Applying the upper bound to the numerator and the lower bound to the
denominator gives
\[
 \frac{\mathcal I_{\alpha-1}(P_tw_t(v))}
      {\mathcal I_{\alpha-1}(w_t(v))}
 \leq \exp\!\left\{\frac{P_tw_t(v)}{\sqrt{d_\alpha}}\right\}
 \leq \exp\!\left\{\frac{C_0P_*}{\sqrt{d_\alpha}\,t}\right\}.
\]
Combining these estimates with $P_t^\alpha\leq P_*^\alpha$ yields
\[
 \E e^{\eta_0 U_{t,v}/t^2}
 \leq P_*^\alpha
       \exp\!\left\{\frac{C_0\pi^2}{6t}\right\}
       \exp\!\left\{\frac{C_0P_*}{\sqrt{d_\alpha}\,t}\right\}
 =C e^{D/t},
\]
where $C:=P_*^\alpha$ and
$D:=C_0\bigl(\pi^2/6+P_*/\sqrt{d_\alpha}\bigr)$ are independent of
$t$ and $v$.  This proves the last inequality in
\eqref{eq:bridge-exp-moment}.

Use $Y_t,\mathcal R_{t,v},g_t^{(M)}$ from \eqref{eq:gamma-block}, with $r=M\alpha>1$.
Since $Y_t$ and $\mathcal R_{t,v}$ are nonnegative and sum to $U_{t,v}$,
we have $0\leq\mathcal R_{t,v}\leq U_{t,v}$.  Hence
\eqref{eq:bridge-first-moment}--\eqref{eq:bridge-exp-moment} give
\[
 \E \mathcal R_{t,v}\leq C_1t,\qquad
 \E e^{\eta_0 \mathcal R_{t,v}/t^2}\leq C e^{D/t}.
\]
The upper bound in \eqref{eq:gamma-block-bounds}, with $y=t^2z$ and
$\gamma_1(t)t^2\geq c_-=2\eta_0$, yields
$\sup_{y\geq0}e^{\eta_0 y/t^2}g_t^{(M)}(y)\leq Ct^{-2}$.
The identity \eqref{eq:bridge-convolution} expresses $f_t(v,\cdot)$
as the convolution of the density $g_t^{(M)}$ of $Y_t$ with the law of
the independent remainder $\mathcal R_{t,v}$.  Since $g_t^{(M)}$ is extended
by zero on $(-\infty,0]$, this gives
\[
 \begin{aligned}
 e^{\eta_0 i/t^2}f_t(v,i)
 &=\E\!\left[e^{\eta_0\mathcal R_{t,v}/t^2}
       e^{\eta_0(i-\mathcal R_{t,v})/t^2}
       g_t^{(M)}(i-\mathcal R_{t,v})\right]\\
 &\leq Ct^{-2}\E e^{\eta_0\mathcal R_{t,v}/t^2}.
 \end{aligned}
\]
The preceding exponential-moment estimate therefore yields
\begin{equation}
 e^{\eta_0 i/t^2}f_t(v,i)\leq Ct^{-2}e^{D/t},
 \qquad v\in[0,q_+],\quad i\geq0.
 \label{eq:weighted-bridge-density}
\end{equation}
For a lower bound, set $R_0:=\max\{1,4C_1\}$.
Markov's inequality gives $\mathbb P(\mathcal R_{t,v}\leq R_0t/2)\geq1/2$.
For $R_0t\leq i\leq2R_0t$, nonnegativity of $\mathcal R_{t,v}$ gives
\begin{equation}
 \left\{\mathcal R_{t,v}\leq\frac{R_0t}{2}\right\}
 \subseteq
 \left\{\frac{R_0t}{2}\leq i-\mathcal R_{t,v}\leq2R_0t\right\}.
 \label{eq:remainder-event-inclusion}
\end{equation}
Restricting the expectation in \eqref{eq:bridge-convolution} to the event
on the left-hand side of \eqref{eq:remainder-event-inclusion}, which has
probability at least $1/2$, and applying the lower bound in
\eqref{eq:gamma-block-bounds} gives
\begin{equation}
 f_t(v,i)\geq c\,t^{-r-1}e^{-D_1/t},
 \qquad v\in[0,q_+],\quad R_0t\leq i\leq2R_0t.
 \label{eq:bridge-density-lower}
\end{equation}

Since $d_t$ is bounded above and below by positive multiples of $t$,
\eqref{eq:cir-density} yields
\begin{equation}
 \inf_{v\in[q_-/2,q_+]}p_t(v)\geq c\,t^{-\alpha}e^{-D_2/t}.
 \label{eq:cir-density-lower}
\end{equation}
Fix $t_{\mathcal K}\leq1$ so that $2\beta R_0t_{\mathcal K}\leq q_-/2$.
In the remaining estimates, take $0<t\leq t_{\mathcal K}$ and
$q\in\mathcal K=[q_-,q_+]$.
Using the representation
$g_t(q)=\int_0^{q/\beta}p_t(q-\beta i)f_t(q-\beta i,i)\dd i$,
we obtain a lower bound by integrating only
over $i\in[R_0t,2R_0t]$.  On this interval,
$q-\beta i\in[q_-/2,q_+]$, so
combining \eqref{eq:bridge-density-lower} and \eqref{eq:cir-density-lower} yields
\begin{equation}
 g_t(q)\geq c\,t^{-\alpha-r}e^{-D_3/t}
          \geq c e^{-D_3/t},\qquad q\in\mathcal K,\quad t\leq t_{\mathcal K}.
 \label{eq:gap-density-lower}
\end{equation}

The conditional density of $I_t^{\mathrm H}$ given $Q_t=q$ is
\begin{equation}
 \frac{p_t(q-\beta i)f_t(q-\beta i,i)}{g_t(q)}
       \boldsymbol1_{\{0<i<q/\beta\}}.
 \label{eq:conditional-integral-density}
\end{equation}
By \eqref{eq:weighted-bridge-density}, the numerator in
\eqref{eq:conditional-integral-density} satisfies the integral bound
\[
 \begin{aligned}
 &\int_0^{q/\beta}e^{\eta_0 i/t^2}
       p_t(q-\beta i)f_t(q-\beta i,i)\dd i\\
 &\qquad\leq Ct^{-2}e^{D/t}\int_0^{q/\beta}p_t(q-\beta i)\dd i\\
 &\qquad=\frac C\beta t^{-2}e^{D/t}\int_0^q p_t(v)\dd v
 \leq\frac C\beta t^{-2}e^{D/t}.
 \end{aligned}
\]
By \eqref{eq:conditional-integral-density}, dividing this integral by
$g_t(q)$ gives the conditional exponential moment.  Applying
\eqref{eq:gap-density-lower} therefore yields, uniformly for $q\in\mathcal K$,
\[
 \E[e^{\eta_0 I_t^{\mathrm H}/t^2}\mid Q_t=q]
       \leq A_{\mathcal K}t^{-2}e^{B_{\mathcal K}/t},\qquad A_{\mathcal K}\geq1.
\]
Conditional Jensen's inequality now implies
\[
 e_t(q)\leq\frac{t^2}{\eta_0}
   \log\E[e^{\eta_0 I_t^{\mathrm H}/t^2}\mid Q_t=q]
 \leq\frac{B_{\mathcal K}}{\eta_0}t+
      \frac{t^2}{\eta_0}\bigl(\log A_{\mathcal K}+2\log(1/t)\bigr)
 \leq C_{\mathcal K}t.
\]
This proves \eqref{eq:U0}; the convergence of $\phi_t$ follows from
\eqref{eq:conditional-integral}.
\end{proof}

\begin{proof}[Proof of Proposition~\ref{prop:surface}]
Extend $\phi_t$, defined in \eqref{eq:phi-definition}, by zero for $q\leq0$,
and set $\phi_0(q):=q^+$.
Using $A$ from \eqref{eq:beta-A}, define
\begin{equation}
 \lambda(t,s):=\phi_t(A(t,s)),\qquad
 \mu(t,x):=\lambda(t,e^x),\qquad t\geq0,\ s>0.
 \label{eq:canonical-coefficients}
\end{equation}
Proposition~\ref{prop:gap-densities} gives the conditional version,
positive-time interior continuity, positivity and the bound
$0\leq\phi_t(q)\leq q^+$.  Lemma~\ref{lem:U0} gives continuity at
$t=0,q>0$.  At $q=0$, the bound by $q^+$ gives
continuity of the zero extension.  Thus $\phi$ is continuous on
$[0,\infty)\times\R$, since the finite horizon $T$ is arbitrary and
the definition of $\phi$ does not depend on it.  Composition with $A$
proves (i)--(iii).  The bounds in (iv) follow from $\phi_t(q)\leq q^+$ and
$A(t,e^x)^+\leq v_0+\kappa\theta T+\xi|\log S_0|+\xi|x|$.
\end{proof}

\section{Uniqueness in law}
\label{app:uniqueness}
In this appendix we prove uniqueness in law for the projected SDE.

\begin{lemma}
Let $Q_t$ be defined by \eqref{eq:gap-process}.  For $t\geq0$ and $0<q<v_0$,
\begin{equation}
 \mathbb P(Q_t\leq q)\leq
 \exp\!\left\{-\frac{\beta(v_0-q)^2}{2\xi^2q}\right\}.
 \label{eq:gap-small-ball}
\end{equation}
Consequently, for each finite $T$,
\begin{equation}
 \sup_{0\leq t\leq T}\E[Q_t^{-1}]<\infty.
 \label{eq:gap-negative-moments}
\end{equation}
\end{lemma}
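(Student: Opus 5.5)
The plan is to work directly with the semimartingale dynamics of $Q_t=v_t+\beta I_t^{\mathrm H}$ and apply an exponential-martingale (Bernstein-type) inequality at the first time $Q$ reaches level $q$. From \eqref{eq:heston-v} with $Z=-W$ and $\dd I_t^{\mathrm H}=v_t\dd t$,
\[
 \dd Q_t=\bigl(\kappa\theta+(\beta-\kappa)v_t\bigr)\dd t-\xi\sqrt{v_t}\,\dd W_t
 =\bigl(\kappa\theta+\tfrac{\xi}{2}v_t\bigr)\dd t-\xi\sqrt{v_t}\,\dd W_t .
\]
The drift is nonnegative. Hence, with $M_t:=-\xi\int_0^t\sqrt{v_u}\,\dd W_u$, we have $Q_t\geq v_0+M_t$ and $\langle M\rangle_t=\xi^2I_t^{\mathrm H}\leq \xi^2Q_t/\beta$. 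The last inequality holds because $v_t\geq0$.

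Fix $0<q<v_0$ and let $\tau:=\inf\{u\geq0:Q_u\leq q\}$. Since $Q_0=v_0>q$ and $Q$ is continuous, $\{Q_t\leq q\}\subseteq\{\tau\leq t\}$, and on this event $Q_\tau=q$. At $\tau$ we therefore get $M_\tau\leq Q_\tau-v_0=-(v_0-q)$ and $\langle M\rangle_\tau\leq\xi^2q/\beta$. So
\[
 \mathbb P(Q_t\leq q)\leq\mathbb P\bigl(M_{\tau\wedge t}\leq-(v_0-q),\ \langle M\rangle_{\tau\wedge t}\leq \xi^2q/\beta\bigr).
\]
For $\lambda>0$, the process $\exp\{-\lambda M-\tfrac{\lambda^2}{2}\langle M\rangle\}$ is a positive local martingale, hence a supermartingale with expectation at most one. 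On the event above it is at least $\exp\{\lambda(v_0-q)-\lambda^2\xi^2q/(2\beta)\}$. Markov's inequality then gives
\[
 \mathbb P(Q_t\leq q)\leq\exp\{-\lambda(v_0-q)+\lambda^2\xi^2q/(2\beta)\}.
\]
Optimizing at $\lambda=\beta(v_0-q)/(\xi^2q)$ yields \eqref{eq:gap-small-ball}. The case $t=0$ is trivial.

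For \eqref{eq:gap-negative-moments}, write $\E[Q_t^{-1}]=\int_0^\infty\mathbb P(Q_t<1/y)\dd y$. Bound the integrand by $1$ for $y\leq 2/v_0$. For $y>2/v_0$, apply \eqref{eq:gap-small-ball} with $q=1/y<v_0/2$; this gives the bound $\exp\{-\beta v_0^2y/(8\xi^2)\}$, which is integrable and independent of $t$. Therefore $\sup_{t\leq T}\E[Q_t^{-1}]\leq 2/v_0+8\xi^2/(\beta v_0^2)$.

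The only delicate point is the stopping-time step: one must use $\langle M\rangle_\tau\leq\xi^2Q_\tau/\beta$ at $\tau$ itself, not at $t$, so that the quadratic-variation bound is $\xi^2q/\beta$. This step uses $v\geq0$ and continuity of $Q$, and should be routine.
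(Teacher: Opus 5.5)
Your proof is correct and follows the paper's argument: the same exponential supermartingale $\exp\{-\lambda M-\tfrac{\lambda^2}{2}\langle M\rangle\}$ with Chernoff optimization, and the same tail integration for $\E[Q_t^{-1}]$ (the paper writes it with the substitution $q=1/y$). Contrary to your closing remark, the stopping time $\tau$ is not needed, although it is harmless and even gives the stronger bound for $\min_{u\le t}Q_u$. At the fixed time $t$, on $\{Q_t\le q\}$ one already has $M_t\le Q_t-v_0\le-(v_0-q)$ and $\langle M\rangle_t=\xi^2 I_t^{\mathrm H}\le\xi^2Q_t/\beta\le\xi^2q/\beta$, and this is exactly how the paper argues.
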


\begin{proof}
Put $M_t:=\int_0^t\sqrt{v_u}\,\dd Z_u$, so that
$\langle M\rangle_t=I_t^{\mathrm H}$.  Since $\beta-\kappa=\xi/2$,
\[
 Q_t=v_0+\kappa\theta t+\tfrac\xi2 I_t^{\mathrm H}+\xi M_t,
 \qquad Q_t\geq\beta I_t^{\mathrm H}.
\]
On $\{Q_t\leq q\}$, we therefore have
$M_t\leq-(v_0-q)/\xi$ and $I_t^{\mathrm H}\leq q/\beta$.
For $\ell>0$, the nonnegative exponential local martingale
$\exp\{-\ell M_t-\ell^2I_t^{\mathrm H}/2\}$ is a supermartingale
starting at one.  Hence
\[
 \mathbb P(Q_t\leq q)\leq
 \exp\!\left\{-\ell\frac{v_0-q}{\xi}+\frac{\ell^2q}{2\beta}\right\}.
\]
The right-hand side is minimized over $\ell>0$ at
$\ell=\beta(v_0-q)/(\xi q)$.  Substituting this value yields
\eqref{eq:gap-small-ball}.  For $q\leq v_0/2$, its right-hand side is
bounded by $e^{-d_0/q}$, where $d_0:=\beta v_0^2/(8\xi^2)>0$.
Thus
\[
 \E[Q_t^{-1}]
 =\int_0^\infty q^{-2}\mathbb P(Q_t<q)\dd q
 \leq\int_0^{v_0/2}q^{-2}e^{-d_0/q}\dd q+\frac2{v_0},
\]
uniformly in $t$.
\end{proof}

Write
\[
 x^*(t):=\log S_0+(v_0+\kappa\theta t)/\xi.
\]
For a strictly positive solution $\bar S$ of \eqref{eq:lv-S}, define
its log-price $\bar X_t:=\log\bar S_t$ and its gap process by
\[
 \bar Q_t:=\xi\bigl(x^*(t)-\bar X_t\bigr)
          =v_0+\kappa\theta t-\xi\log(\bar S_t/S_0),
 \qquad \bar Q_0=v_0.
\]
The gap is the scaled distance of the log-price below the moving
barrier $x^*(t)=\log s^*(t)$.
The log-price and gap forms of the projected equation are
\begin{align}
 \dd\bar X_t&=-\tfrac12\mu(t,\bar X_t)\,\dd t
              +\sqrt{\mu(t,\bar X_t)}\,\dd B_t,
       &\bar X_0&=\log S_0,\label{eq:lv-log}\\
 \dd\bar Q_t&=\left(\kappa\theta+\tfrac\xi2\phi_t(\bar Q_t)\right)\dd t
             +\xi\sqrt{\phi_t(\bar Q_t)}\,\dd\widetilde B_t,
       &\bar Q_0&=v_0,\label{eq:projected-gap-sde}
\end{align}
where $\widetilde B=-B$.
The zero extension of $\phi$ makes the gap equation well defined on the
whole real line.
It\^o's formula gives
\[
 \dd\bar X_t
 =\frac{\dd\bar S_t}{\bar S_t}
   -\frac12\lambda(t,\bar S_t)\,\dd t
 =-\frac12\mu(t,\bar X_t)\,\dd t
   +\sqrt{\mu(t,\bar X_t)}\,\dd B_t,
\]
which is \eqref{eq:lv-log}.  Conversely, if $\bar X$ solves
\eqref{eq:lv-log}, then $\bar S_t=e^{\bar X_t}>0$ and
\[
 \dd\bar S_t
 =\bar S_t\,\dd\bar X_t
   +\frac12\bar S_t\mu(t,\bar X_t)\,\dd t
 =\bar S_t\sqrt{\lambda(t,\bar S_t)}\,\dd B_t,
\]
so $\bar S$ solves \eqref{eq:lv-S}.
Moreover, $\mu(t,\bar X_t)=\phi_t(\bar Q_t)$ and
$\xi(x^*)'(t)=\kappa\theta$.  Thus the affine change of variables
$\bar Q_t=\xi(x^*(t)-\bar X_t)$ gives
\[
 \begin{aligned}
 \dd\bar Q_t
 &=\kappa\theta\,\dd t-\xi\,\dd\bar X_t\\
 &=\left(\kappa\theta+\frac\xi2\phi_t(\bar Q_t)\right)\dd t
   -\xi\sqrt{\phi_t(\bar Q_t)}\,\dd B_t,
 \end{aligned}
\]
which is \eqref{eq:projected-gap-sde} with $\widetilde B=-B$.
The substitution $\bar X_t=x^*(t)-\bar Q_t/\xi$, with
$B=-\widetilde B$, recovers \eqref{eq:lv-log}.
These transformations preserve the specified initial conditions and
give a one-to-one correspondence between continuous real-valued
solutions of the log-price and gap equations and continuous strictly
positive solutions of the spot equation.

\begin{proposition}
\label{prop:barrier-nonattainment}
Every continuous mimicking solution $\bar Q$ of
\eqref{eq:projected-gap-sde} satisfies
\[
 \mathbb P(\bar Q_t>0\text{ for all }t\in[0,T])=1.
\]
Equivalently, the corresponding spot process $\bar S$ solving
\eqref{eq:lv-S} satisfies $\bar S_t<s^*(t)$ for all $t\in[0,T]$,
almost surely.
\end{proposition}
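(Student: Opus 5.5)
The plan is to combine the marginal information from the mimicking property with an It\^o computation for $\log\bar Q$ up to the first hitting time of zero. Since $\bar Q_t=\xi(x^*(t)-\bar X_t)$ with $\bar X_t=\log\bar S_t$, and $\Law(\bar S_t)=\Law(S_t)$ for each $t$, we get $\Law(\bar Q_t)=\Law(Q_t)$ for every $t\in[0,T]$. Hence \eqref{eq:gap-negative-moments} and Tonelli give
\[
 \E\int_0^T\bar Q_t^{-1}\boldsymbol 1_{\{\bar Q_t>0\}}\,\dd t
 \leq T\sup_{0\leq t\leq T}\E[Q_t^{-1}]<\infty .
\]
Consequently $\int_0^T\bar Q_t^{-1}\boldsymbol 1_{\{\bar Q_t>0\}}\dd t<\infty$ almost surely. (In passing, $\mathbb P(\bar Q_t\le 0)=\mathbb P(Q_t\le0)=0$ for each fixed $t$.)

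Let $\tau:=\inf\{t\in[0,T]:\bar Q_t=0\}$. Since $\bar Q_0=v_0>0$ and the path is continuous, $\bar Q$ is positive on $[0,\tau)$, so it suffices to show $\mathbb P(\tau\leq T)=0$. On $[0,\tau)$, localize by $\tau_n:=\inf\{t:\bar Q_t\leq 1/n\}\wedge T$ and apply It\^o's formula to $\log\bar Q$ using \eqref{eq:projected-gap-sde}:
\[
 \log\bar Q_{t\wedge\tau_n}=\log v_0
 +\int_0^{t\wedge\tau_n}\frac{\kappa\theta+\tfrac\xi2\phi_u(\bar Q_u)}{\bar Q_u}\,\dd u
 -\int_0^{t\wedge\tau_n}\frac{\xi^2\phi_u(\bar Q_u)}{2\bar Q_u^2}\,\dd u
 +\int_0^{t\wedge\tau_n}\frac{\xi\sqrt{\phi_u(\bar Q_u)}}{\bar Q_u}\,\dd\widetilde B_u .
\]
The key structural input is the bound $0\le\phi_u(q)\le q$ from Proposition~\ref{prop:gap-densities}, which remains valid for the zero extension. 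It implies that the first drift integrand is nonnegative. The second drift integrand is bounded in absolute value by $\xi^2/(2\bar Q_u)$. The quadratic variation of the stochastic integral is at most $\xi^2\int_0^{\tau\wedge T}\bar Q_u^{-1}\dd u$.

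By the first paragraph, both the negative drift and the quadratic variation are almost surely finite on $[0,\tau\wedge T)$. A continuous local martingale whose quadratic variation is finite on a stochastic interval converges almost surely at its right endpoint, so its supremum of absolute values there is finite. Letting $n\to\infty$, we conclude that $\inf_{t<\tau\wedge T}\log\bar Q_t>-\infty$ almost surely. On $\{\tau\le T\}$, however, continuity forces $\log\bar Q_t\to-\infty$ as $t\uparrow\tau$. Hence $\mathbb P(\tau\leq T)=0$. The equivalent statement for $\bar S$ then follows from the identity $\bar Q_t=\xi(\log s^*(t)-\log\bar S_t)$.

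The main obstacle I expect is the integrability step. Without a Feller condition, no pathwise drift comparison can keep $\bar Q$ away from zero: the $\log$ drift need not be positive. The argument therefore leans entirely on transferring the Heston negative-moment bound \eqref{eq:gap-negative-moments} to $\bar Q$ through the mimicking marginals, and on the linear bound $\phi_t(q)\le q$. If that bound is not directly available at all times, including near $t=0$, I would fall back on the extension $\phi_0(q)=q^+$ and on the fact that $0\le\phi_t(q)\le q^+$ holds for all $t\ge0$, as established in the proof of Proposition~\ref{prop:surface}.
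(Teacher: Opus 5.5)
Your proposal is correct and follows essentially the same route as the paper. Both arguments transfer the negative-moment bound \eqref{eq:gap-negative-moments} to $\bar Q$ via mimicking and Tonelli, apply It\^o's formula to $\log\bar Q$ before the hitting time $\tau$, and use $\phi_u(q)\le q$ to dominate the drift and the quadratic variation by $\int\bar Q_u^{-1}\,\dd u$, so that $\log\bar Q$ cannot tend to $-\infty$ at $\tau$. The differences are cosmetic: you drop the nonnegative drift term and use almost-sure finiteness of the quadratic variation instead of the paper's $L^2$-martingale bound, and your indicator replaces the paper's preliminary nonnegativity step.
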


\begin{proof}
Mimicking and \eqref{eq:gap-process} give
$\mathbb P(\bar Q_t\geq0)=1$ for each fixed $t$.
By countability, nonnegativity holds at all rational times in $[0,T]$
on a single event of probability one.  Path continuity extends it to
every $t\in[0,T]$ on the same event.
Mimicking, Tonelli's theorem and \eqref{eq:gap-negative-moments} give
\begin{equation}
 \E\int_0^T\frac{\dd t}{\bar Q_t}
   =\int_0^T\E[Q_t^{-1}]\,\dd t<\infty,
 \label{eq:projected-gap-inverse-integral}
\end{equation}
where $1/0=+\infty$.
Let $\tau:=\inf\{t\geq0:\bar Q_t=0\}$.  It\^o's formula gives for $t<\tau$,
\begin{align*}
 \log\bar Q_t=\log v_0
 &+\int_0^t\left[\frac{\kappa\theta}{\bar Q_u}
       +\frac\xi2\frac{\phi_u(\bar Q_u)}{\bar Q_u}
       -\frac{\xi^2}{2}\frac{\phi_u(\bar Q_u)}{\bar Q_u^2}\right]\dd u\\
 &+\xi\int_0^t\frac{\sqrt{\phi_u(\bar Q_u)}}{\bar Q_u}\,\dd\widetilde B_u.
\end{align*}
Because $\phi_u(q)\leq q$ for $q>0$, the absolute value of the first integrand
is bounded by $(\kappa\theta+\xi^2/2)/\bar Q_u+\xi/2$, and the martingale
quadratic variation is at most $\xi^2\int_0^t\bar Q_u^{-1}\dd u$.
After setting both integrands to zero at and after $\tau$,
\eqref{eq:projected-gap-inverse-integral} ensures that the right-hand side
defines a finite continuous process $\Lambda$ on $[0,T]$: the drift integral
is absolutely convergent and the stochastic integral is a square-integrable
martingale.  Exponentiating gives $\bar Q_t=e^{\Lambda_t}$ for $t<\tau$.
By continuity, $\bar Q_{\tau\wedge T}=e^{\Lambda_{\tau\wedge T}}>0$ almost surely,
so $\tau>T$ almost surely.
\end{proof}

\begin{proof}[Proof of Theorem~\ref{thm:timezero}]
Fix a weak mimicking solution $\bar S^\star$ of \eqref{eq:lv-S}
supplied by Lemma~\ref{lem:mimicking-existence}, and set
$\bar X_t^\star:=\log\bar S_t^\star$.  The corresponding gap process is 
\[
 \bar Q_t^\star:=\xi\bigl(x^*(t)-\bar X_t^\star\bigr)
   =v_0+\kappa\theta t-\xi\log(\bar S_t^\star/S_0),
 \qquad \bar Q_0^\star=v_0.
\]
The transformation above shows that $\bar Q^\star$ is a weak mimicking
solution of \eqref{eq:projected-gap-sde}.   By
Proposition~\ref{prop:barrier-nonattainment}, its paths are continuous
and strictly positive on $[0,T]$.

Work on the complete separable state space $E:=\R\times[0,\infty)$.
For a sufficiently large integer $n$ with $v_0\in(1/n,n)$, set
\[
 \mathfrak c_n(q):=\max\{1/n,\min\{q,n\}\},\qquad
 \phi^{(n)}(q,t):=\phi_{t\wedge T}(\mathfrak c_n(q)),\qquad (q,t)\in E.
\]
Here $\mathfrak c_n$ replaces state arguments outside $[1/n,n]$ by the
nearest endpoint, while the time argument is set to $T$ for $t>T$.
Proposition~\ref{prop:surface} and
Lemma~\ref{lem:U0} give
\[
 \underline a_n:=\xi^2
 \min_{(y,u)\in[1/n,n]\times[0,T]}\phi_u(y)>0.
\]
Thus the drift $b^{(n)}:=\kappa\theta+\xi\phi^{(n)}/2$ and
covariance $a^{(n)}:=\xi^2\phi^{(n)}$ are bounded and continuous
on $E$, and for each fixed $n$,
\[
 a^{(n)}(q,t)\geq\underline a_n>0,
 \qquad (q,t)\in E.
\]
Define the time-dependent operator
\[
 \mathcal L_t^{(n)}f(q):=b^{(n)}(q,t)f'(q)
                  +\tfrac12a^{(n)}(q,t)f''(q),
 \qquad f\in C_c^\infty(\R).
\]
The boundedness, continuity and uniform ellipticity above imply, by
\cite[Theorem~8.1.7, pp.~370--371]{EthierKurtz1986}, that the
time-dependent martingale problem for $\mathcal L^{(n)}$ is well posed.
Define the corresponding space--time operator by
\begin{equation}
 \mathcal A^{(n)}:=\left\{
   \bigl(f\gamma,\,\gamma\mathcal L^{(n)}f+f\gamma'\bigr):
   f\in C_c^\infty(\R),\ \gamma\in C_c^1([0,\infty))
 \right\}.
 \label{eq:cutoff-generator}
\end{equation}
Set $\nu_0:=\delta_{(v_0,0)}$.  For any solution $(Y_u,R_u)$ with initial
law $\nu_0$, the time-coordinate martingale identities force $R_u=u$.
By \cite[Theorem~4.7.1, p.~221]{EthierKurtz1986}, $Y$ solves the
time-dependent martingale problem for $\mathcal L^{(n)}$ with $Y_0=v_0$.
Uniqueness for $\mathcal L_t^{(n)}$
gives uniqueness for $(\mathcal A^{(n)},\nu_0)$.
The bounded continuous coefficients also give weak existence for the
space--time equation from every initial law on $E$, by
\cite[Theorem~5.3.10, p.~299]{EthierKurtz1986}.

Let $\bar Q$ be any continuous real-valued weak solution of
\eqref{eq:projected-gap-sde} from $v_0$.  Set
\[
 U_n:=(1/n,n)\times[0,T),
 \qquad
 \eta_n:=T\wedge\inf\{t\in[0,T]:\bar Q_t\notin(1/n,n)\},
\]
with $\inf\varnothing:=\infty$.  The sets $U_n$ are increasing and open
relative to $E$.  Define $\eta_n^\star$ analogously for $\bar Q^\star$.
By It\^o's formula,
the stopped processes
$(\bar Q_{u\wedge\eta_n},u\wedge\eta_n)_{u\geq0}$ and
$(\bar Q_{u\wedge\eta_n^\star}^\star,u\wedge\eta_n^\star)_{u\geq0}$
both solve the stopped martingale problem for $(\mathcal A^{(n)},\nu_0,U_n)$.
By \cite[Lemma~4.5.16, p.~206]{EthierKurtz1986}, each stopped process,
jointly with its stopping time, has the law of $(Z_{\cdot\wedge\tau},\tau)$
for a solution $Z$ of $(\mathcal A^{(n)},\nu_0)$ and a random time $\tau$.
Since the path lies in $U_n$ before $\tau$ and outside it at $\tau$,
$\tau$ is the exit time of $Z$ from $U_n$.
Uniqueness therefore gives equality of the stopped laws.

By the nonattainment result in Proposition~\ref{prop:barrier-nonattainment}
and continuity, $\bar Q^\star$ has a positive minimum and a finite maximum
on $[0,T]$, almost surely.  Hence $\eta_n^\star=T$ for all sufficiently
large $n$, almost surely.  Writing $\mathbb P^\star$ for the reference
probability, equality of the stopped laws gives
\[
 \mathbb P(\eta_n<T)=\mathbb P^\star(\eta_n^\star<T)
       \longrightarrow0.
\]
The full and stopped paths agree when their stopping time equals $T$.
Thus, for every bounded Borel functional $F$ on $C([0,T];\R)$,
equality of the stopped laws and the triangle inequality give
\[
 \left|\E F(\bar Q)-\E^\star F(\bar Q^\star)\right|
 \leq 2\|F\|_\infty
       \bigl(\mathbb P(\eta_n<T)+\mathbb P^\star(\eta_n^\star<T)\bigr)
 \longrightarrow0.
\]
Hence $\bar Q$ and $\bar Q^\star$ have the same law on $C([0,T];\R)$.
Transforming back to spot proves the first claims.
Finally, $\int_0^t\lambda(u,\bar S_u)\dd u$ is a measurable functional
of the spot path, which proves the accumulated-variance assertion.
\end{proof}


\section{Smooth comparison and approximation}
\label{app:comparison}

This appendix proves the smooth comparison and approximation results used
in the proofs of Theorems~\ref{thm:main} and~\ref{thm:strict}.

\subsection{The smooth comparison}

For each starting time $r\in[0,T]$, consider the extended state
\begin{equation}
\begin{aligned}
 \dd X_u&=-\tfrac12\nu(u,X_u)\,\dd u+\sqrt{\nu(u,X_u)}\,\dd B_u,\\
 \dd I_u&=\nu(u,X_u)\,\dd u,
 \qquad (X_r,I_r)=(x,i),
\end{aligned}
 \label{eq:extended-state}
\end{equation}
for a mapping $\nu$ on $[0,T]\times\R$, and let
\[
 \Dop:=\tfrac12(\partial_{xx}-\partial_x)+\partial_i,
\]
so that the backward generator of \eqref{eq:extended-state} is
$\nu(r,x)\Dop$.

\begin{lemma}\label{lem:smooth}
Let $\nu:[0,T]\times\R\to(0,\infty)$ be jointly continuous and
$C^\infty$ in $x$, with $\inf\nu>0$ and, for some $C_\nu<\infty$,
\[
 \nu(r,x)\leq C_\nu(1+|x|).
\]
Assume that every positive-order spatial derivative of $\nu$ is bounded
and jointly continuous.  Let $\psi\in C^\infty(\R)$ be nondecreasing and
of at most linear growth, with bounded derivatives of all positive orders,
and put
\begin{equation}
 U(r,x,i):=\E_{r,x}\Bigl[\psi\Bigl(i+\int_r^T\nu(u,X_u)\,\dd u\Bigr)\Bigr],
 \qquad h:=\Dop U.
 \label{eq:U-def}
\end{equation}
Then $U\in C^{1,2,1}$, with one-sided time derivatives at the endpoints,
and
\[
 U_r+\nu\Dop U=0,
 \qquad U(T,x,i)=\psi(i).
\]
The functions $U_x$ and $h$ are bounded, $h\geq0$ on
$[0,T]\times\R\times\R$, and, if $\psi$ is convex, the map
$i\mapsto h(r,x,i)$ is nondecreasing.  Moreover, for
$0\leq t < t'\leq T$,
\begin{equation}
 U(t,x,i)-U(t',x,i)=\int_t^{t'}\nu(r,x)\,h(r,x,i)\,\dd r.
 \label{eq:U-monotone}
\end{equation}
\end{lemma}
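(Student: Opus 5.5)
The plan is to work in the flow of the extended SDE \eqref{eq:extended-state} and reduce every claim to derivatives of $U$, each of which has its own Feynman--Kac representation. Under the hypotheses ($\nu$ smooth in $x$, bounded positive-order derivatives, $\inf\nu>0$, linear growth), the SDE for $X$ has a unique strong solution. This solution is $C^\infty$ in the initial point $x$, with derivative processes $\partial_xX_u$, $\partial_{xx}X_u$ that have moments of all orders uniformly in $(r,x)$ on compacts. Since $\psi$ has bounded positive-order derivatives, differentiation under the expectation gives $U_i=\E[\psi'(i+J)]$ and $U_{ii}=\E[\psi''(i+J)]$, where $J:=\int_r^T\nu(u,X_u)\,\dd u$. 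It also gives $U_x=\E[\psi'(i+J)\,\partial_xJ]$ and the analogous formula for $U_{xx}$. Regularity in $r$, together with the backward equation $U_r+\nu\Dop U=0$, then follows from the Markov property: apply It\^o's formula to $U(u,X_u,I_u)$ on $[r,r+\epsilon]$ and let $\epsilon\downarrow0$. Equivalently, one can invoke the classical Cauchy problem theory for the time-dependent operator $\nu\Dop$ with smooth spatial coefficients. One-sided derivatives at $r=0,T$ come out of the same argument. Identity \eqref{eq:U-monotone} is then just the integral of $U_r=-\nu h$ in $r$ at fixed $(x,i)$.

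For the qualitative properties I would avoid the Bismut-type derivative formulas and use a time change. Set $\tau(u):=\int_r^u\nu(w,X_w)\,\dd w$. Then $Y:=X\circ\tau^{-1}$ is a Brownian motion with drift $-\tfrac12$ run in the clock $I-i$, so $e^{Y}$ is a geometric Brownian motion in that clock. This motivates looking for equations satisfied by $V:=U_x$ and $h$. Differentiating the PDE in $x$ gives
\[
 V_r+\nu\Dop V+\nu_x h=0 .
\]
Applying $\Dop$ to the PDE, and noting that $\Dop$ commutes with $\partial_r$ but not with multiplication by $\nu$, gives
\[
 h_r+\nu\Dop h+\tfrac12(\nu_{xx}-\nu_x)h+\nu_x h_x=0 ,
\]
with terminal condition $h(T,x,i)=\psi'(i)\geq0$. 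This is a linear parabolic equation for $h$ with a first-order term and a potential term. Feynman--Kac under a Girsanov-modified diffusion then represents $h$ as
\[
 h=\E\Bigl[\psi'(\cdot)\exp\Bigl(\int(\ldots)\Bigr)\Bigr].
\]
Since $\psi'\geq0$ and the exponential weight is positive, this gives $h\geq0$. Boundedness of $h$ and $U_x$ follows from $\psi'$ bounded, bounded derivatives of $\nu$ on the finite horizon, and Gronwall estimates on the derivative processes.

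For monotonicity in $i$, the key observation is that all coefficients are independent of $i$, so $h_i$ solves the same linear equation as $h$, now with terminal data $\psi''(i)$. When $\psi$ is convex, $\psi''\geq0$, and the same positive Feynman--Kac representation gives $h_i\geq0$, hence $i\mapsto h(r,x,i)$ is nondecreasing.

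The main obstacle is making this differentiated Feynman--Kac rigorous: justifying $U\in C^{1,2,1}$ with enough derivatives to differentiate the PDE a second time, and proving a verification (uniqueness) result for the $h$-equation given unbounded linear-growth $\nu$ and the potential term. I would first approximate $\nu$ by coefficients that are also bounded above, and approximate $\psi$ by functions with compactly supported derivative. For such data I would use Schauder theory, which gives $U\in C^{1,3}$ in $(r,x)$ and smooth in $i$. In that setting the Feynman--Kac formulas for $h$ and $h_i$ hold by a direct It\^o verification with localization. Finally I would pass to the limit using the uniform bounds, which preserves the sign and monotonicity conclusions.
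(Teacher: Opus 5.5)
Your proposal is essentially the paper's proof. Flow differentiation (Kunita) gives $U$ smooth in $(x,i)$ with bounded, jointly continuous spatial derivatives, the Markov property gives the backward equation, and differentiating that equation yields $h_r+\nu\Dop h+\nu_xh_x+\tfrac12(\nu_{xx}-\nu_x)h=0$. Its Feynman--Kac representation under the diffusion with drift $\nu_x-\tfrac12\nu$ and potential $\tfrac12(\nu_{xx}-\nu_x)$ gives $h\geq0$ and, since that diffusion does not depend on $i$, monotonicity in $i$; the paper reads this off directly from $\psi'$ being nondecreasing rather than going through $h_i$.

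Your Schauder/approximation detour is unnecessary. As stated it would also not apply directly, because $\nu$ is only continuous in $r$ and the operator is degenerate in $i$. The flow bounds already make $h$ and $h_x$ bounded, so the It\^o verification of the Feynman--Kac formula is direct. Likewise, obtain $U_r$ by applying It\^o's formula to the frozen function $U(q,\cdot,\cdot)$ and letting $q\downarrow r$, not to $U(u,X_u,I_u)$, which presupposes the time regularity you are trying to prove.
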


\begin{proof}
Put $\underline\nu:=\inf\nu>0$ and
$M_1:=\|\nu_x\|_\infty<\infty$.  Since
\[
 \bigl\|\partial_x(-\nu/2)\bigr\|_\infty\leq\frac{M_1}{2},
 \qquad
 \bigl\|\partial_x\sqrt\nu\bigr\|_\infty
 =\left\|\frac{\nu_x}{2\sqrt\nu}\right\|_\infty
 \leq\frac{M_1}{2\sqrt{\underline\nu}},
\]
the coefficients $-\nu/2$ and $\sqrt\nu$ are globally Lipschitz in $x$,
uniformly in $r\in[0,T]$.  They have linear growth, and their
spatial derivatives of all positive orders are bounded.  The smooth-flow
theorem and derivative estimates in
Kunita~\cite[Ch.~4, Thms.~4.6.4--4.6.5, pp.~172--175]{Kunita1990},
together with the Burkholder--Davis--Gundy inequality and Gronwall's lemma
applied inductively to the derivative equations, give the time-supremum bounds
\[
 \sup_{r\in[0,T]}\sup_{x\in\R}
 \E\Bigl[\sup_{u\in[r,T]}|\partial_x^jX_u^{r,x}|^p\Bigr]<\infty,
 \qquad j\geq1,\quad 1\leq p<\infty.
\]
Differentiation of $\int_r^T\nu(u,X_u^{r,x})\dd u$ satisfies the same
bounds for every positive-order derivative.  Consequently, differentiation
under the expectation in \eqref{eq:U-def} is legitimate: $U$ is smooth
in $(x,i)$, and all its positive-order spatial derivatives are bounded
and jointly continuous, also at $r=T$.  The function $U$ itself has at
most linear growth.

For $r<q\leq T$, the Markov property and It\^o's formula applied to the
fixed function $U(q,\cdot,\cdot)$ give
\begin{equation*}
 U(r,x,i)-U(q,x,i)
 =\E_{r,x,i}\int_r^q\nu(u,X_u)\Dop U(q,X_u,I_u)\,\dd u.
\end{equation*}
Localization is removed using the preceding derivative and moment
bounds.  Dividing by $q-r$ and taking the right limit, and then using
the analogous left difference, proves
\begin{equation}
 U_r=-\nu\Dop U=-\nu h,\qquad U(T,x,i)=\psi(i),
 \label{eq:U-pde}
\end{equation}
with $U_r$ jointly continuous on $[0,T]\times\R^2$.

Spatial differentiation of the integral form of \eqref{eq:U-pde} is
justified by the locally bounded continuous derivatives just obtained.
Thus $h_r=-\Dop(\nu h)$.  The relevant product identity is
\begin{equation*}
 \Dop(\nu h)=\nu\Dop h+\nu_xh_x
                      +\tfrac12(\nu_{xx}-\nu_x)h.
\end{equation*}
It follows that $h\in C^{1,2,1}$ and
\begin{equation}
 \begin{aligned}
 h_r+\tfrac12\nu h_{xx}
   +(\nu_x-\tfrac12\nu)h_x+\nu h_i+\varpi h&=0,\\
 h(T,x,i)=\psi'(i),\qquad
 \varpi:=\tfrac12(\nu_{xx}-\nu_x)&.
 \end{aligned}
 \label{eq:h-pde}
\end{equation}
Since $\nu_x$ and $\nu_{xx}$ are bounded by assumption, $\varpi$ is bounded.

Consider the auxiliary diffusion
\begin{equation}
 \begin{aligned}
 \dd\widehat X_u
  &=(\nu_x-\tfrac12\nu)(u,\widehat X_u)\,\dd u
             +\sqrt{\nu(u,\widehat X_u)}\,\dd\widehat B_u,\\
 \dd\widehat I_u&=\nu(u,\widehat X_u)\,\dd u,
       \qquad(\widehat X_r,\widehat I_r)=(x,i).
 \end{aligned}
 \label{eq:aux}
\end{equation}
Its coefficients are globally Lipschitz in the state variables, uniformly
in time, and have linear growth.  Set
\[
 \mathcal E_u:=\exp\!\left\{\int_r^u\varpi(w,\widehat X_w)\,\dd w\right\}.
\]
It\^o's formula and \eqref{eq:h-pde} show that
$\mathcal E_u h(u,\widehat X_u,\widehat I_u)$ has zero drift.  The stochastic integral is a true martingale: $h_x$ and
$\varpi$ are bounded, and
$\widehat\E_{r,x}\int_r^T\nu(u,\widehat X_u)\dd u<\infty$.
Taking expectations and using continuity at $T$ gives
\begin{equation}
 h(r,x,i)=\widehat\E_{r,x}\!\left[
 e^{\int_r^T\varpi(u,\widehat X_u)\dd u}
 \psi'\!\left(i+\int_r^T\nu(u,\widehat X_u)\,\dd u\right)\right].
 \label{eq:h-FK}
\end{equation}
In particular,
$0\leq h\leq e^{T\|\varpi\|_\infty}\|\psi'\|_\infty$.
If $\psi$ is convex, then $\psi'$ is nondecreasing.
The law of $\widehat X$ in \eqref{eq:aux} is independent of $i$, so
\eqref{eq:h-FK} shows that $h$ is nondecreasing in $i$.
Finally, integration of \eqref{eq:U-pde} at fixed $(x,i)$
proves \eqref{eq:U-monotone}.
\end{proof}

\subsection{Removal of smoothness}

\begin{lemma}
\label{lem:stability}
Let $\nu:[0,T]\times\R\to[0,\infty)$ be continuous, such that
for some $C_\nu<\infty$
\begin{equation}
 \nu(u,x)\leq C_\nu(1+|x|),\qquad(u,x)\in[0,T]\times\R.
 \label{eq:G}
\end{equation}
Fix $(x,i)\in\R\times[0,\infty)$, and suppose that the SDE
\begin{equation}
 \dd X_u=-\tfrac12\nu(u,X_u)\,\dd u+\sqrt{\nu(u,X_u)}\,\dd B_u,
 \qquad u\in[0,T],\qquad X_0=x,
 \label{eq:limit-sde}
\end{equation}
has a weak solution with continuous paths whose law on
$C([0,T];\R)$ is unique, and set
$I_u:=i+\int_0^u\nu(w,X_w)\,\dd w$.  Then there exist a constant
$C_\nu'<\infty$ and mappings $\nu_n$ on $[0,T]\times\R$ with the following
properties.
\begin{enumerate}[label=\textup{(\roman*)}]
\item Each $\nu_n$ is jointly continuous and $C^\infty$ in $x$, with every
positive-order spatial derivative bounded and jointly continuous.

\item We have
\begin{equation}
 \begin{gathered}
  \nu_n\geq\nu+\tfrac1n,\qquad
  \nu_n(u,y)\leq C_\nu'(1+|y|)\quad\text{for all }n,\\
  \sup_{(u,y)\in[0,T]\times[-R,R]}
        |\nu_n(u,y)-\nu(u,y)|\longrightarrow0
        \quad\text{for every }R>0.
 \end{gathered}
 \label{eq:nu-n}
\end{equation}
\item Let $(X^n,I^n)$ solve \eqref{eq:extended-state} with $\nu$ replaced by
$\nu_n$ and initial condition $(X_0^n,I_0^n)=(x,i)$.  Then
\begin{equation}
 (X^n,I^n)\Longrightarrow(X,I)
 \qquad\text{in }C([0,T];\R\times\R_+).
 \label{eq:conv}
\end{equation}
\end{enumerate}
\end{lemma}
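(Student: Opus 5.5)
The construction of $\nu_n$ proceeds by mollification plus an additive shift, and the convergence in (iii) by tightness plus identification of the limit through the martingale problem, using the assumed uniqueness in law.

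\textbf{Construction.} Extend $\nu$ continuously to $\R\times\R$ in time by $\nu(u,x):=\nu(0,x)$ for $u<0$ and $\nu(u,x):=\nu(T,x)$ for $u>T$. Let $\chi_\varepsilon$ be a standard mollifier in $x$ only, and set
\[
 \tilde\nu_\varepsilon(u,y):=\int_\R\nu(u,y-z)\chi_\varepsilon(z)\,\dd z .
\]
Mollifying in space alone suffices, because (i) asks only for smoothness in $x$. Every $x$-derivative of $\tilde\nu_\varepsilon$ is $\int\nu(u,y-z)\chi_\varepsilon^{(j)}(z)\dd z$. This is jointly continuous, but it is not bounded, since $\nu$ grows linearly.

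To repair boundedness, I would first replace $\nu$ by $\nu\wedge N$ before mollifying. The derivatives are then bounded by $N\|\chi_\varepsilon^{(j)}\|_{L^1}$. However, the truncation destroys the majorant property $\nu_n\geq\nu+1/n$ for large $|y|$.

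A better choice is a smooth upper envelope with globally Lipschitz growth. Put
\[
 \omega_\varepsilon(y):=\sup_{u\in[0,T],\,|z|\leq\varepsilon}|\nu(u,y+z)-\nu(u,y)| ,
\]
and set $\nu_\varepsilon^+:=\tilde\nu_\varepsilon+\sup_{|z|\leq\varepsilon}\omega_\varepsilon(\cdot+z)$, suitably smoothed. Since only linear growth is known, I would simplify by adding a large smooth term instead:
\[
 \nu_n:=\tilde\nu_{\varepsilon_n}+\tfrac1n+\rho_n(y),
\]
where $\rho_n\geq0$ is smooth, vanishes on $[-n,n]$, satisfies $\rho_n\leq C(1+|y|)$ uniformly in $n$, has bounded derivatives, and dominates $|\nu-\tilde\nu_{\varepsilon_n}|$ outside $[-n,n]$. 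The linear bound $2C_\nu(1+|y|+1)$ suffices for this domination. On $[-n,n]$, uniform continuity of $\nu$ on compacts lets me choose $\varepsilon_n$ with $|\tilde\nu_{\varepsilon_n}-\nu|\leq 1/(2n)$ there. I would then use $1/n$ in the shift rather than $1/(2n)$, and adjust constants accordingly.

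The remaining problem is bounded derivatives of $\tilde\nu_{\varepsilon_n}$ itself. I would handle it by taking $\tilde\nu$ of $\nu\wedge N_n$ on the region $|y|\leq n+1$, patched by a smooth partition of unity to $\rho_n$-dominated pieces. Choosing $\rho_n(y)=c(1+|y|)$-type smooth functions with bounded derivatives for $|y|\geq n$, the patched function has bounded derivatives. These steps give (i) and (ii), with local uniform convergence because the perturbations vanish on $[-n,n]$.

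\textbf{Convergence.} Each $\nu_n$ is Lipschitz with linear growth, so $X^n$ exists strongly. Using $\nu_n\leq C'(1+|y|)$, BDG and Gronwall give $\sup_n\E\sup_{u\leq T}|X_u^n|^4<\infty$. Kolmogorov's increment bound, $\E|X^n_t-X^n_s|^4\leq C|t-s|^2$, then yields tightness of $(X^n,I^n)$, and also of the martingale parts. Take a subsequential limit $(X',I')$. Local uniform convergence $\nu_n\to\nu$, together with continuity of $\nu$ and the moment bounds, shows that for $f\in C_c^\infty$
\[
 f(X'_t)-\int_0^t\nu(u,X'_u)\tfrac12(f''-f')(X'_u)\,\dd u
\]
is a martingale, and that $I'_t=i+\int_0^t\nu(u,X'_u)\dd u$. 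The second identity follows by passing to the limit in the integral functional, which is continuous along sequences with the uniform convergence on compacts, combined with localization via the tightness-based truncation $\sup|X|\leq R$. The martingale problem then gives a weak solution of \eqref{eq:limit-sde}, and the assumed uniqueness identifies its law. Every subsequence therefore converges to the same law, and \eqref{eq:conv} follows.

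\textbf{Main obstacle.} I expect the main obstacle to be the simultaneous requirements in (i)--(ii): bounded derivatives of all orders, the majorant $\nu_n\geq\nu+1/n$, and uniform linear growth. This is exactly where the patching construction has to be carried out carefully.
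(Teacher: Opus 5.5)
Your proposal is correct. Part (iii) is essentially the paper's argument: uniform fourth moments from the common growth bound, Kolmogorov tightness, passage to the limit in the martingale problem (using that $A_nf\to Af$ uniformly because $f$ has compact support), identification by uniqueness in law, and continuous mapping for $I$. One addition is needed there: strong existence of $X^n$ uses $\nu_n\geq1/n$ to make $\sqrt{\nu_n}$ Lipschitz.

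The genuine difference is the construction in (i)--(ii). The paper replaces $\nu(u,\cdot)$ by the $L_n$-Lipschitz sup-envelope $g_n(u,x)=\sup_y\{\nu(u,y)-L_n|x-y|\}$, with $L_n=C_\nu+n$. This envelope majorizes $\nu$, keeps the bound $C_\nu(1+|x|)$, and converges locally uniformly. The paper then mollifies at scale $\delta_n=L_n^{-3}$ and adds $L_n\delta_nc_\chi+1/n$. A single Lipschitz estimate bounds all derivatives globally, with no cutoffs, but joint continuity requires Berge's maximum theorem and local convergence requires a maximizer-localization bound.

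Your localization is more elementary, and it assembles more simply than your sketch suggests. Take
$\nu_n:=(\nu\wedge N_n)*\chi_{\varepsilon_n}+2/n+\rho_n$, where the convolution is spatial and:
$N_n\geq\sup_{[0,T]\times[-n-2,n+2]}\nu$;
$\varepsilon_n\leq1$ keeps the mollification error at most $1/(2n)$ on $[0,T]\times[-n-1,n+1]$;
$\rho_n\geq0$ is smooth, zero on $[-n,n]$, and equal to $C_\nu(2+|y|)$ for $|y|\geq n+1$.
The error bound must hold on $[-n-1,n+1]$, not merely on $[-n,n]$, because as written a smooth $\rho_n$ vanishing on $[-n,n]$ cannot dominate $|\nu-\tilde\nu_{\varepsilon_n}|$ just outside $[-n,n]$.

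With this choice the truncated mollification has globally bounded, jointly continuous derivatives, and $\rho_n$ restores exactly the majorant property that truncation destroyed. No partition of unity is needed. One gets $\nu_n\geq\nu+3/(2n)$ on $[-n-1,n+1]$ and $\nu_n\geq\nu+2/n$ outside it. The growth bound is uniform in $n$, and $\nu_n=\nu*\chi_{\varepsilon_n}+2/n$ on $[-R,R]$ once $n\geq R$.

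The price is an approximant that is crude far out, with derivative bounds that deteriorate in $n$. This is harmless, since the lemma and its later use need only per-$n$ derivative bounds, a uniform linear bound and local uniform convergence.
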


\begin{proof}
Let $\chi\in C_c^\infty(\R)$ be a nonnegative mollifier,
supported in $[-1,1]$, with $\int\chi=1$, and set
\[
 c_\chi:=\int_\R |z|\chi(z)\,\dd z,
 \qquad L_n:=C_\nu+n,
 \qquad \delta_n:=L_n^{-3}.
\]
An $L_n$-Lipschitz majorant of $\nu(u,\cdot)$ is
\[
 g_n(u,x):=\sup_{y\in\R}
       \{\nu(u,y)-L_n|x-y|\}.
\]
By \eqref{eq:G},
\begin{equation}
 \nu(u,x)\leq g_n(u,x)\leq C_\nu(1+|x|).             \label{eq:g-n-bounds}
\end{equation}
Indeed,
\[
 \nu(u,y)-L_n|x-y|
 \leq C_\nu(1+|x|)-(L_n-C_\nu)|x-y|.
\]
The function $g_n(u,\cdot)$ is $L_n$-Lipschitz,
\begin{equation}
 |g_n(u,x)-g_n(u,x')|\leq L_n|x-x'|,
 \qquad u\in[0,T],\quad x,x'\in\R.
 \label{eq:g-n-lipschitz}
\end{equation}
The supremum defining $g_n$ is attained, and, if $|x|\leq R$,
every maximizer $y$ satisfies
\begin{equation}
 |x-y|\leq\frac{C_\nu(1+R)}{L_n-C_\nu}
          =\frac{C_\nu(1+R)}{n}.
 \label{eq:maximizer-distance}
\end{equation}
Thus, for fixed $n$ and $R$, the supremum defining $g_n$ may be restricted
to a fixed compact interval in $y$.  Since
$(u,x,y)\mapsto\nu(u,y)-L_n|x-y|$ is continuous, Berge's maximum theorem
gives joint continuity of $g_n$.  The bound \eqref{eq:maximizer-distance}
on $|x-y|$, together with uniform continuity of $\nu$ on compact sets, shows that
\[
 \sup_{(u,x)\in[0,T]\times[-R,R]}
       |g_n(u,x)-\nu(u,x)|\longrightarrow0
       \qquad\text{as }n\to\infty,\quad\text{for every }R>0.
\]
Let
\begin{equation}
 \nu_n(u,x):=\int_\R g_n(u,x-\delta_nz)\chi(z)\,\dd z
            +L_n\delta_n c_\chi+\frac1n.                 \label{eq:majorant-construction}
\end{equation}
Applying \eqref{eq:g-n-lipschitz} with $x'=x-\delta_nz$ and integrating
against $\chi(z)\,\dd z$ gives
\[
 \left|\int_\R g_n(u,x-\delta_nz)\chi(z)\,\dd z-g_n(u,x)\right|
 \leq L_n\delta_n c_\chi,
\]
so $\nu_n\geq g_n+1/n\geq\nu+1/n$.  Equations
\eqref{eq:g-n-bounds}--\eqref{eq:majorant-construction} give the uniform
linear-growth bound and convergence on each set
$[0,T]\times[-R,R]$ in \eqref{eq:nu-n}, since
$L_n\delta_n=L_n^{-2}\to0$.  Spatial convolution ensures all the asserted
smoothness.  Each positive-order spatial derivative is bounded: after
differentiating the mollifier, subtract the constant $g_n(u,x)$ and apply
\eqref{eq:g-n-lipschitz} to the difference
$g_n(u,x-\delta_nz)-g_n(u,x)$.  Joint continuity of those derivatives follows
from the joint continuity of $g_n$ and dominated convergence.

The common linear-growth bound, the Burkholder--Davis--Gundy inequality
and Gronwall's lemma give, by stopping first at bounded levels,
\begin{equation}
 \sup_n\E\sup_{0\leq u\leq T}|X_u^n|^4
      \leq C_T(1+|x|^4).
 \label{eq:approx-fourth-moment}
\end{equation}
The constants depend on the common growth bound and $T$, not on the
individual Lipschitz constants.  Next, H\"older's inequality gives
\[
 \E\left(\int_u^w\nu_n(q,X_q^n)\,\dd q\right)^p
 \leq(w-u)^{p-1}(C_\nu')^p
        \int_u^w\E(1+|X_q^n|)^p\,\dd q,
 \qquad p\in\{2,4\}.
\]
Applied to the drift and the quadratic variation, this estimate and
\eqref{eq:approx-fourth-moment} imply
\begin{equation}
 \E|X_w^n-X_u^n|^4\leq C_T(1+|x|^4)(w-u)^2,
 \qquad 0\leq u\leq w\leq T.
 \label{eq:approx-increments}
\end{equation}
The same bounds give
\begin{equation}
 \sup_n\E\sup_{0\leq u\leq T}|I_u^n|^4
      \leq C_T(1+|x|^4+i^4).
 \label{eq:approx-integral-fourth-moment}
\end{equation}
Kolmogorov's tightness criterion applied to
\eqref{eq:approx-increments} makes the laws of $X^n$ tight on
$C([0,T];\R)$.

For $f\in C_c^\infty(\R)$ set
\[
 A_n(u)f(y):=\tfrac12\nu_n(u,y)(f''(y)-f'(y)),
 \qquad A(u)f(y):=\tfrac12\nu(u,y)(f''(y)-f'(y)).
\]
The common growth bound and the compact support of $f'$ and $f''$
make these functions bounded uniformly in $n$.  Moreover,
\eqref{eq:nu-n} gives
\[
 \varepsilon_n:=\sup_{(u,y)\in[0,T]\times\R}
       |A_n(u)f(y)-A(u)f(y)|\longrightarrow0.
\]
Let $X^{n_j}\Rightarrow Z$ be any convergent subsequence.
For $0\leq s<t\leq T$, define the path functional
\[
 J_{s,t}^f(z):=\int_s^t A(u)f(z(u))\,\dd u,
 \qquad z\in C([0,T];\R).
\]
Its absolute value is at most
$(t-s)\sup_{(u,y)\in[0,T]\times\R}|A(u)f(y)|<\infty$.
If $z_m\to z$ in the supremum norm, choose $R<\infty$ such that all
these paths take values in $[-R,R]$.  Uniform continuity of
$(u,y)\mapsto A(u)f(y)$ on $[0,T]\times[-R,R]$ then gives uniform
convergence of the integrands, and hence $J_{s,t}^f(z_m)\to J_{s,t}^f(z)$.
Thus $J_{s,t}^f$ is bounded and continuous, and we have
\[
 \sup_{z\in C([0,T];\R)}
 \left|\int_s^t A_n(u)f(z(u))\,\dd u-J_{s,t}^f(z)\right|
 \leq(t-s)\varepsilon_n\longrightarrow0.
\]
For a bounded continuous function $g:\R^m\to\R$ and times
$0\leq t_1\leq\cdots\leq t_m\leq s$, put
$H(z):=g(z(t_1),\ldots,z(t_m))$.
The martingale identity for $X^{n_j}$ therefore yields
\[
 \left|\E\!\left[H(X^{n_j})
   \bigl(f(X_t^{n_j})-f(X_s^{n_j})-J_{s,t}^f(X^{n_j})\bigr)\right]\right|
 \leq\|H\|_\infty(t-s)\varepsilon_{n_j}\longrightarrow0.
\]
The functional inside this expectation is bounded and continuous on
path space.  Weak convergence therefore gives
\[
 \E\!\left[H(Z)\bigl(f(Z_t)-f(Z_s)-J_{s,t}^f(Z)\bigr)\right]=0.
\]
A monotone-class argument extends this identity to every bounded
measurable functional of the path up to time $s$, proving the
martingale property.  Since the map $\omega\mapsto\omega(0)$ is continuous
on path space and $X_0^n=x$, we also have $Z_0=x$ almost surely.
Thus every subsequential limit solves the martingale problem
for $A$ with the prescribed initial state.

The continuous-path martingale-problem/SDE representation for locally
bounded drift and covariance identifies this limit with a weak solution
of \eqref{eq:limit-sde}; see
Ethier--Kurtz~\cite[Thm.~5.3.3, pp.~293--294]{EthierKurtz1986}.
Here we extend $\nu$ by zero beyond $T$ and hold $Z$ constant there.
Uniqueness in law at the specified starting point gives
$X^n\Rightarrow X$.

Suppose continuous paths $z_n$ converge uniformly to $z$ on $[0,T]$.
These paths are uniformly bounded, so \eqref{eq:nu-n} and continuity of
$\nu$ give uniform convergence of the integrands.  Hence
\[
 \sup_{0\leq u\leq T}\left|
  \int_0^u\nu_n(w,z_n(w))\,\dd w
     -\int_0^u\nu(w,z(w))\,\dd w\right|\longrightarrow0.
\]
Together with $X^n\Rightarrow X$, this proves \eqref{eq:conv} by the
extended continuous-mapping theorem.  To extend the fourth-moment
bounds for the supremum norms to the limit, for $R>0$ define
\[
 \Theta_R(z):=\min\{R,\|z\|_\infty^4\},\qquad
 \|z\|_\infty:=\sup_{0\leq u\leq T}|z(u)|,
 \qquad z\in C([0,T];\R).
\]
This functional is bounded and continuous.  Hence \eqref{eq:conv} and
the preceding moment bounds give
\begin{align*}
 \E[\Theta_R(X)]
   &=\lim_{n\to\infty}\E[\Theta_R(X^n)]
     \leq C_T(1+|x|^4),\\
 \E[\Theta_R(I)]
   &=\lim_{n\to\infty}\E[\Theta_R(I^n)]
     \leq C_T(1+|x|^4+i^4).
\end{align*}
Letting $R\uparrow\infty$ and applying monotone convergence gives the
same bounds for $\E\|X\|_\infty^4$ and $\E\|I\|_\infty^4$.
For every continuous payoff $F$ with at most linear growth, the
fourth-moment bounds also imply that $F(X_T^n,I_T^n)$ is uniformly
integrable.  Weak convergence therefore yields
\[
 \E[F(X_T^n,I_T^n)]\longrightarrow\E[F(X_T,I_T)]
\]
as needed in the main comparison.
\end{proof}

\section*{Declaration on the use of generative AI}

During the development and preparation of this paper, the author used
OpenAI's ChatGPT and Anthropic's Claude as exploratory mathematical tools.
The central analytical
ingredients---notably the use of the Duhamel identity and the Chebyshev
monotonicity argument---were introduced by the author in his interactions
with the systems.  ChatGPT assisted in combining and developing these
author-supplied ideas, carrying out intermediate calculations, testing the
resulting implications, and identifying the construction that led to the
reverse-ordering result presented in the paper.  Claude was used to verify
the work.  Thus, although the underlying
proof strategy originated with the author, the result emerged from
AI-assisted exploration.  The author is responsible for independently reviewing and verifying
every mathematical argument in the final manuscript, including the
revised proof simplifications.  All mathematical judgments and
conclusions in the submitted version remain the responsibility of the
author.

\end{document}